\newcommand\reallywidehat[1]{%
\savestack{\tmpbox}{\stretchto{%
  \scaleto{%
    \scalerel*[\widthof{\ensuremath{#1}}]{\kern.1pt\mathchar"0362\kern.1pt}%
    {\rule{0ex}{\textheight}}
  }{\textheight}%
}{2.1ex}}%
\stackon[-6.9pt]{#1}{\tmpbox}%
}
\newcommand{\ad}{\operatorname{ad}}
\newcommand{\Ad}{\operatorname{Ad}}
\newtheorem{theorem}{Theorem}[section]
\newtheorem{definition}{Definition}[section]
\newtheorem{lemma}{Lemma}[section]
\newtheorem{remark}{Remark}[section]
\title{Nonholonomic reduction for mechanical systems with collisions}
\author[1]{Álvaro Rodríguez Abella\thanks{alvrod06@ucm.es}}
\author[2]{Leonardo J. Colombo\thanks{leonardo.colombo@csic.es}}
\affil[1]{Department of Mathematics and Computer Science, Saint Louis University (Madrid Campus), Avenida del Valle, 34, Madrid, 28003, Madrid, Spain}
\affil[2]{Centro de Automática y Robótica (CSIC-UPM), Carretera de Campo Real, km 0, 200, 28500, Arganda del Rey, Spain.}
\date{}
\begin{document}

\maketitle

\begin{abstract}
This paper studies nonsmooth variational problems on principal bundles for nonholonomic systems with collisions taking place in the boundary of the manifold configuration space of the nonholonopmic system. In particular, we first extended to a nonsmooth context appropriate for collisions the variational principle for nonholonomic implicit Lagrangian systems, to obtain implicit Lagrange--d'Alembert--Pontryagin equations for nonholonomic systems with collisions, and after introducing the notion of connection on a principal bundle we consider Lagrange--Poincar\'e--Pointryagin reduction by symmetries for systems with collisions. 
\end{abstract}

\noindent \emph{Keywords:} Mechanics with collisions, Nonsmooth implicit Lagrangian systems, Lagrange--Poincar\'e reduction, Nonholonomic systems, Symmetries.

\noindent\emph{2020 Mathematics Subject Classification:} 49J52, 49S05, 53B05, 53C05.

\section{Introduction}

 Reduction theory is one of the fundamental tools in the study of mechanical systems
with symmetries and it essentially concerns the removal of certain variables by using the symmetries of the system and the associated
conservation laws. Such symmetries arise when one has a  Lagrangian that is  invariant
under a Lie group action $G$, i.e., when the Lagrangian function is invariant under the tangent lift of the action of the Lie group on the configuration manifold $Q$. If we denote by $\Phi_g:Q\to Q$ this (left) action, $g\in G$, then the invariance condition under the tangent lift action is expressed by $L\circ T\Phi_g=L$. If such an invariance property holds when the configuration manifold is the group itself, $Q=G$, and the action is given by left translations, that is, $\Phi_g=L_g:G\to G$ is given by $L_g(h)=gh$ for each $h\in G$, we say that the Lagrangian $L$ is $G$-invariant. For
a symmetric mechanical system, reduction by symmetries eliminates the directions
along the group variables and thus provides a system with fewer degrees of freedom. 

If the (finite-dimensional) differentiable
manifold $Q$ has local coordinates $(q^i)$, $1\leq i\leq$ dim$\,Q$ and we denote by $TQ$ its
tangent bundle with induced local coordinates $(q^i, \dot{q}^i)$, given a Lagrangian function $L:TQ\rightarrow \mathbb{R}$, its Euler--Lagrange
equations are
\begin{equation}\label{qwer}
\frac{d}{dt}\left(\frac{\partial L}{\partial\dot
q^i}\right)-\frac{\partial L}{\partial q^i}=0, \quad 1\leq i\leq \mbox{dim}\,Q.
\end{equation} As is well-known, when $Q$ is the configuration manifold  of a mechanical system, equations \eqref{qwer} determine its dynamics.

A paradigmatic example of reduction is the derivation of the Euler--Poincar\'e equations from the Euler--Lagrange equations  \eqref{qwer} when the configuration manifold is a Lie group, i.e. $Q=G.$ Assuming that the Lagrangian $L:TG\to\mathbb{R}$ is left invariant under the action of $G$ it is possible to reduce the system by introducing the body fixed velocity $\xi\in\mathfrak{g}$  and the reduced Lagrangian $\ell:(TG)/G\simeq\mathfrak{g}\to\mathbb{R}$, provided by the invariance condition $\ell(\xi)=L(g^{-1}g,g^{-1}\dot{g})=L(e,\xi)$. The dynamics of the reduced Lagrangian is governed by the \textit{Euler--Poincar\'e} equations (see \cite{bloch1996euler} and \cite{holm1998euler} for instance) and given by the system of first order ordinary differential equations
\begin{equation}\label{Euler--Poincare-Eq}
\frac{d}{dt}\left(\frac{\delta\ell}{\delta\xi}\right)=\ad^{*}_{\xi}\left(\frac{\delta\ell}{\delta\xi}\right).
\end{equation}
This system, together with the reconstruction equation $\xi(t)=g^{-1}(t)\dot{g}(t)$, is equivalent to the Euler--Lagrange equations on $G$, which are given by
\[
\frac{d}{dt}\left(\frac{\partial L}{\partial\dot
g}\right)=\frac{\partial L}{\partial g}.
\] 

Reduction theory for mechanical systems with symmetries can be also developed by
using a variational principle formulated on a
principal bundle $\pi_{Q,\Sigma}:Q \to\Sigma$, where $\Sigma= Q/G$ and a principal connection $\omega$ is introduced on $Q$ in order to define a bundle isomorphism (cf. \cite{CeMaRa2001})
\[
(TQ)/G \rightarrow T\Sigma \oplus \tilde{\mathfrak {g}},\quad
[v_q]_G\mapsto\left( T_q\pi_{Q,\Sigma} (v_q), [ q, \omega_q(v_q)]_G\right),
\] where $\oplus$ denotes the fibered direct sum over $\Sigma$, the bracket is the standard Lie bracket on the Lie algebra $\mathfrak{g}$ and $\tilde{\mathfrak{g}}=(Q\times\mathfrak{g})/G$ is the adjoint bundle of $\pi_{Q,\Sigma}$. A curve $q:[t_0,t_1]\to Q$ induces the two curves
$\sigma=\pi_{Q,\Sigma}\circ q:[t_0,t_1]\to\Sigma$ and $\overline\zeta=[q, \omega (\dot q)]_G:[t_0,t_1]\to\tilde{\mathfrak{g}}$. Variational Lagrangian reduction \cite{CeMaRa2001} states that the
Euler--Lagrange equations on $Q$ for a $G$-invariant Lagrangian $L$
are equivalent to the Lagrange--Poincar\'e equations on $(TQ)/G \simeq T\Sigma \oplus  \tilde{\mathfrak{g}}$ for the reduced
Lagrangian $\ell :T\Sigma \oplus  \tilde{\mathfrak{g}}\to\mathbb{R}$, which read
\begin{equation}\label{Intro-LagPoincareGeneral}
\left\{
\begin{array}{l}
\displaystyle\vspace{0.2cm}\frac{\nabla^{\omega*}}{dt}\frac{\delta\ell }{\delta\overline\zeta} - \mbox{ad}^*_{\overline\zeta}\frac{\delta\ell }{\delta\overline\zeta}=0
,\\
\displaystyle\frac{\delta\ell }{\delta\sigma} - \frac{\nabla^{\Sigma*}}{dt}\frac{\delta\ell }{\delta\dot{\sigma}} = \frac{\delta\ell}{\delta\overline\zeta}\cdot\left(i_{\dot{\sigma}}\tilde F^\omega\right),
\end{array}
\right.
\end{equation} where $\tilde F^\omega$ is the reduced curvature form associated
to the principal connection $\omega$, $\nabla^{\omega*}/dt$ denotes the 
covariant derivative in the coadjoint bundle bundle, and $\nabla^{\Sigma*}/dt$ denotes the covariant derivative on the cotangent bundle.

Some mechanical systems have a restriction on the configurations or velocities that the system may assume. Systems with such restrictions are generally called constrained systems. Nonholonomic  systems  \cite{bloch2003nonholonomic}, \cite{ne_mark2004dynamics} are, roughly speaking, mechanical systems with constraints on their velocity that are not derivable from position constraints. They arise, for instance, in mechanical systems that have rolling contact (e.g., the rolling of wheels without slipping) or certain kinds of sliding contact (such as the sliding of skates).  There are some differences between nonholonomic systems and classical Hamiltonian or Lagrangian systems. Among them, nonholonomic systems are nonvariational, they arise from Lagrange-d’Alembert principle and not from Hamilton’s principle; they may preserve the energy of the system as Hamiltonian systems, but they are not, in general, time-reversible, and they do not preserve, in general, the momentum for systems with symmetries (i.e., Noether’s theorem does not apply, in general, for nonholonomic systems). 
In this paper we study the reduction by symmetries of implicit nonholonomic systems, that is, nonholonomic dynamics obtained through the reduction of Hamilton-Pontryagin principle giving rise to a set of second-order implicit ordinary differential equations. Redcution by symmetries for nonholonomic systems has been developed in \cite{cendra2006lagrange}, \cite{cendra2008lagrangian} and \cite{cendra2001geometric}, and implicit nonholonomic systems in \cite{GaYo2015}.

Mechanical systems subject to collisions are confined within a region of space with boundary. Collision with the boundary for elastic impacts activates constraint on the momentum and energy after and before the collision occurs. The problem of collisions has been extensively treated in the
literature since the early days of mechanics (see \cite{brogliato1999nonsmooth} for a comprenshive review and references therein). More recently, much work has been done on the rigorous mathematical foundation of impact problems \cite{haddad2006impulsive}, \cite{westervelt2018feedback} for applications to bipedal locomotion. Nonholonomic systems subject to impacts or impulse effects has been previously studied in \cite{clark2019bouncing}, \cite{colombo2022geometric}. In terms of reduction by symmetries for systems with collisions/impulse effects, a hybrid scheme for Routh reduction for hybrid Lagrangian systems with cyclic variables is found in~\cite{ames2006hybrid} and \cite{colombo2020symmetries}, inspired to gain a better understanding of bipedal walking models (see also~\cite{ames2007geometric} and references therein). Symplectic reduction for hybrid Hamiltonian systems has been introduced in \cite{ames2006hybrid2} and extended to time-dependent systems in \cite{eyrea2020note}. Poisson reduction of hybrid Hamiltonian systems has been studied in \cite{EyCoBl2021}.  The main goal of this paper is to provide a theory for reduction by symmetries of nonholonomic systems subject to collision conditions from a non-smooth mechanics point of view by using techniques of variational calculus on manifolds and the theory of connections on principal bundles closing the gap of reduction theories for mechanical systems subject to collisions.


The remainder of the paper is structured as follows. Section \ref{sec:main} introduces nonholonomic systems. Section \ref{sec3} studies nonholonomic implicit Euler--Lagrange equations with collisions via Hamilton–d’Alembert–Pontryagin principle for non-smooth systems. In Section \ref{sec4} we study nonholonomic implicit Lagrange–Poincar\'e reduction with collisions, in particular, we study the case of reduced nonholonomic systems on Lie algebras and Euler--Poincar\'e--Suslov systems. Applicability examples are shown throughout the entire paper.

\section{Implicit nonholonomic systems}
\label{sec:main}
Let $Q$ be a differentiable manifold with $\hbox{dim}(Q)=n$, and $q^{i}$ be a particular choice of local coordinates on this manifold. In the following, $TQ$ denotes the tangent bundle of $Q$, with $T_{q}Q$ being the tangent space at each point $q\in Q$ and $v_{q}\in T_q Q$ being a vector. In addition, the coordinate chart $q^{i}$ induces a natural coordinate chart on $TQ$ denoted by $(q^{i},\dot{q}^{i})$. Let $T^{*}Q$ be the cotangent bundle of $Q$, which is locally described by the positions and the momenta of the system, i.e., $(q,p)\in T^{*}Q$. The cotangent space at each point $q\in Q$ is denoted by $T_{q}^{*}Q$. 
 

 A $k$-dimensional distribution on $Q$ is a vector subbundle $\Delta_Q\subset TQ$ with $k$-dimensional fiber, i.e., $\Delta_Q(q)\subset T_q Q$ is a $k$-dimensional subspace for each $q\in Q$. Moreover, $\Delta_Q$ is smooth if there exist a neighborhood $U$ of each point $q\in Q$ and local vector fields $X_1,\ldots,X_k\in\mathfrak X(U)$ that span $\Delta_Q$ on $U$, that is, $\Delta_Q(q)=\hbox{span}\{X_{1}(q),\ldots,X_k(q)\}$ for all $q\in U$. 
define \textit{codistributions.} 

Analogously, a $k$-dimensional codistribution on
$Q$ is a vector subbundle $\widetilde{\Delta_Q}\subset aT^{*}Q$ with $k$-dimensional fiber. Given the concept of codistribution, it is possible to define the \textit{annihilator} of a distribution $\Delta_Q\subset TQ$; namely, it is the codistribution given by $\Delta_Q^\circ(q)=\{\alpha\in T_{q}^{*}Q\,\,|\,\,\langle\alpha,v\rangle=0,\,\,\forall v\in\Delta_Q(q)\},\, q\in Q$, where $\langle\cdot,\cdot\rangle$ denotes the dual pairing.

Linear constraints on the
velocities are locally given by equations of the form
$\phi^{a}(q^i, \dot{q}^i)=\mu^a_i(q)\dot{q}^i=0, \, 1\leq a\leq
m$, depending, in general, on their configuration coordinates and  their
velocities. {}From an intrinsic point of view, the linear
constraints are defined by a distribution $\Delta_Q$ on
$Q$ of constant rank $n-m$ such that the annihilator of $\Delta_Q$ is locally given at each point of $Q$ by $\Delta_Q^\circ(q) = \operatorname{span}\left\{ \mu^{a}(q)=\mu_i^{a}dq^i \; \mid 1 \leq a
\leq m \right\}$, where the $1$-forms $\mu^{a}$ are linearly independent at each point of $Q$. When the constraint distribution $\Delta_Q$ is nonintegrable, the linear constraints are said to be nonholonomic. 

In addition to these constraints, we
need to specify the dynamical evolution of the system, usually by
fixing a Lagrangian function $L\colon  TQ \to \mathbb{R}$. The
central concepts permitting the extension of mechanics from the
Newtonian point of view to the Lagrangian one are the notions of
virtual displacements and virtual work. These concepts were
formulated in the developments of mechanics and in their application to
statics. In nonholonomic dynamics,  the procedure is given by the
\textit{Lagrange--d'Alembert principle}.
 This principle allows us to determine the set of possible values of the constraint forces from the set $\Delta_Q$ of admissible kinematic states alone. The resulting equations of motion are
\begin{equation*}
\left[ \frac{d}{dt}\left( \frac{\partial L}{\partial \dot
q^i}\right) - \frac{\partial L}{\partial q^i} \right] \delta
q^i=0,
\end{equation*}
where $\delta q^i$ denotes the virtual displacements verifying $\mu^a_i\delta q^i =0$.
By using Lagrange multipliers, we
obtain
\begin{equation}\label{ldaeq}
\frac{d}{dt}\left( \frac{\partial L}{\partial \dot
q^i}\right)-\frac{\partial L}{\partial q^i}={\lambda}_a\mu^a_i  .
\end{equation}
The term on the right-hand side represents the  constraint force or
reaction force induced by the constraints and the functions $\lambda_a$ are the Lagrange multipliers which, after
being computed using the constraint equations, allow us to obtain a
set of second-order differential equations.

Alternatively to the use of Lagrange multipliers, the phase space may be enlarged to the Pontryagin bundle $TQ\oplus T^{*}Q$ and the Lagrange--d'Alembert--Pontryagin principle may be considered. This variational principle is given by $$\delta\int_{t_0}^{t_1}\left(L(q(t),v(t))+\langle p(t),\dot{q}(t)-v(t)\rangle\right)dt=0,$$
where $v(t)\in\Delta_Q(q(t))$ and the variations $(\delta q(t),\delta v(t), \delta p(t))$ are such that $\delta q(t)\in\Delta_Q(q(t))$ and vanishes at the endpoints. Then stationary condition for a curve $(q(t),v(t),p(t))$ yields the implicit Lagrange--d'Alembert equations on $TQ\oplus T^{*}Q$ (see \cite{yoshimura2006dirac}): $$p=\frac{\partial L}{\partial v},\quad\dot{q}=v\in\Delta_Q(q),\quad\dot{p}-\frac{\partial L}{\partial q}\in\Delta_Q^\circ(q).$$

Now, let $G$ be a finite dimensional Lie group. A \textit{left action} of $G$ on a manifold $Q$ is a smooth mapping $\Phi:G\times Q\to Q$ such that $\Phi(e,q)=q$ for all $q\in Q$, where $e\in G$ denotes the identity element, and $\Phi(g,\Phi(h,q))=\Phi(gh,q)$ for all  $g,h\in G, q\in Q$. In particular, for each $g\in G$, the map $\Phi_g:Q\to Q$ defined as $\Phi_{g}(q):=\Phi(g,q)$ is a diffeomorphism.

Let $\mathfrak{g}=T_e G$ be the Lie algebra of $G$ and consider the left group action of $G$ on itself, i.e., $\Phi: G \times G \to G$ defined as $\Phi(g,h) = L_{g} (h)= gh $ for all $g,h \in G$. The infinitesimal generator corresponding to $\xi \in \mathfrak{g}$ is $\xi_Q \in \mathfrak{X} (Q)$ which is defined as $\xi_Q(q) = d/dt|_{t=0} \Phi (\exp(t\xi), q)$, where $\exp:\mathfrak g\to G$ denotes the exponential map. 
 
 The Lie bracket on $\mathfrak{g}$ is denoted by $[\cdot,\cdot]$. For each $\xi \in \mathfrak{g}$, the \textit{adjoint map}, $\ad_\xi : \mathfrak{g} \to \mathfrak{g}$ is defined as $\ad_\xi(\eta)= [\xi, \eta]$ for all $\eta \in \mathfrak{g}$. Similarly, the map $\ad_\xi^{*}:\mathfrak{g}^{*}\to\mathfrak{g}^{*}$ denotes the \textit{co-adjoint operator} and is defined as $\langle\ad_{\xi}^{*}(\mu),\eta\rangle=\langle\mu,\ad_{\xi}(\eta)\rangle$ for all $\eta\in\mathfrak{g}$ and $\mu\in\mathfrak g^*$.

For nonholonomic (possibly degenerate) Lagrangian systems on Lie groups that are invariant by the left action of $G$ on itself, the reduced Lagrangian is denoted by $\ell:\mathfrak{g}\to\mathbb{R}$, and the reduction of the Lagrange--d'Alembert equations \eqref{ldaeq} yields the following equations of motion \cite{yoshimura2007reduction}:
\begin{equation}\label{EPSeq}
\mu=\frac{\delta\ell}{\delta\eta},\qquad\xi=\eta\in\mathfrak{d},\qquad\dot{\mu}-\ad_{\xi}^{*}(\mu)\in\mathfrak{d}^{\circ},
\end{equation}
where $\mathfrak{d}\subset\mathfrak{g}$ is the reduced constraint and $\mathfrak{d}^{\circ}\subset\mathfrak{g}^{*}$ denotes its annihilator. These equations are the implicit analog of the Euler--Poincar\'e--Suslov equations \cite{bloch2003nonholonomic}. For this reason, they are called the implicit Euler--Poincar\'e--Suslov equations for nonholonomic mechanics \cite{GaYo2015,yoshimura2007reduction}.

On the other hand, roughly speaking an \textit{Ehresmann connection} specifies how a quantity
associated with a manifold changes as we move from one point to
another; that is to say, it ``connects'' neighboring spaces. In terms of fiber
bundles, a connection tells us how movement in the total space
induces change along the fibers. Recall that a bundle is a triple
$(E,\pi,M)$, where $\pi: E\to M$ is a surjective submersion. The manifolds $M$ and $E$ are knwon as the base space and the total space, respectively,
and the map $\pi$ is known as the projection of the bundle. For each $x\in
M$, the manifold $\pi^{-1}(\{x\})\subset E$ is the fiber of the bundle
over $x\in M$. Given a free and proper (left) action $\Phi:G\times Q\to Q$, the quotient projection defines a \textit{principal bundle} $\pi_{Q,\Sigma}:Q\to \Sigma= Q/G$, where
$\Sigma$ is endowed with the unique manifold structure making $\pi$ a submersion (see, for example, \cite{kobayashi1996foundations}). The manifold $\Sigma$ is called the \textit{shape space}. 

For unconstrained (possibly degenerate) Lagrangian systems on a smooth manifold $Q$ that are invariant by the left action of a Lie grup $G$ on $Q$, a principal connection allos for identifying the quotient $(TQ)/G\simeq T\Sigma\oplus\tilde{\mathfrak g}$. Therefore, the reduced
Lagrangian reads $\ell :T\Sigma \oplus  \tilde{\mathfrak{g}}\to\mathbb{R}$ and the implicit Lagrange--Poincar\'e equations are given by \cite{YoMa2009}
\begin{equation*}
\begin{array}{ll}
\displaystyle\frac{\nabla^{\Sigma*}y}{dt}=\frac{\delta\ell}{\delta\sigma}-\overline\rho\cdot\left(i_{\dot\sigma}\tilde F^\omega\right),\quad & \displaystyle\dot{\sigma}=u,\vspace{1mm}\\
\displaystyle y=\frac{\delta\ell}{\delta u}, & \displaystyle\overline{\xi}=\overline{\eta},\vspace{1mm}\\
\displaystyle\frac{\nabla^{\omega*}\overline{\rho}}{dt}=\ad^{*}_{\tilde{\xi}}\left(\overline{\rho}\right), & \displaystyle\overline{\rho}=\frac{\delta\ell}{\delta\overline{\eta}}.
\end{array}
\end{equation*}

Note that the problem of implicit nonholonomic Lagrange--Poincar\'e reduction has not been considered in \cite{YoMa2009} nor in \cite{GaYo2015}.

\section{Nonholonomic implicit Lagrangian mechanics with collisions}\label{sec3} 

Let $Q$ be a smooth manifold with boundary, denoted by $\partial Q$, $L:TQ\to\mathbb R$ be a (possibly degenerate) Lagrangian, and $\Delta_Q\subset TQ$ be a (possibly nonholonomic) constraint distribution.
According to Section \ref{sec:main}, the annihilator of $\Delta_Q$ is denoted by $\Delta_Q^\circ\subset T^*Q$.

\subsection{Configuration space and phase space}

 Given $[\tau_0,\tau_1]\subset\mathbb R$ and $\tilde\tau\in [\tau_0,\tau_1]$, the \emph{path space with a unique collision (at $\tau=\tilde\tau$)} is defined as $\Omega(Q,\tilde\tau)=\mathcal T\times\mathcal Q(\tilde\tau)$, where
\begin{equation*}
\mathcal T=\left\{\alpha_T\in C^\infty([\tau_0,\tau_1])\mid \alpha_T'(\tau)>0,~\tau\in [\tau_0,\tau_1]\right\}
\end{equation*}
and
\begin{multline}\label{eq:mathcalQ}
\qquad\qquad\mathcal Q(\tilde\tau)=\big\{\alpha_Q\in C^0([\tau_0,\tau_1],Q)\mid\alpha_Q(\tilde\tau)\in\partial Q,\\
\alpha_Q\text{ is piecewise }C^2\text{ and has only one singularity at }\tilde\tau\big\}.\qquad\qquad
\end{multline}
We only consider one singularity at $\tau=\tilde\tau$ for brevity, but similar results hold for a finite amount of singularities, $\{\tilde\tau_i\mid 1\leq i\leq N\}\subset [\tau_0,\tau_1]$.

\begin{remark}[Zeno behaviour]
Systems with collisions are a particular instance of hybrid systems. For systems with elastic impacts, the \emph{guard} is given by $S=\{v_q\in T_q Q\mid q\in\partial Q,~g(v_q,n_q)>0\}$, where $g$ is a Riemannian metric on $Q$ and $n$ is the outward-pointing, unit, normal vector field on the boundary. Similarly, the \emph{reset map} is given by $R(v_q)=v_q^{_\parallel}-v_q^{\perp}$, where $v_q^\perp=g(v_q,n_q)\,n_q$ and $v_q^{_\parallel}=v_q-v_q^\perp\in T_q\partial Q$. Recall that hybrid systems may experience Zeno behaviour if a trajectory undergoes infinitely many impacts in finite time. In order to avoid this situation, we ask the system to satisfy two conditions (cf. \cite[Remark 2.1]{GoCo2020}):
\begin{enumerate}
    \item $S\cap\overline R(S)=\emptyset$, where $\overline R(S)$ is the closure of $R(S)\subset TQ$. This condition is clearly satisfied in our case. Indeed, for each $v_q\in S$ we have $v_q^\perp\neq 0$ and, thus, $\parallel R(v_q)-v_q\parallel_g=2\parallel v_q^\perp\parallel_g>0$, being $\parallel\cdot\parallel_g$ the norm induced by the metric $g$.
    \item The set of collision times is closed and discrete. This condition, which depends on the topology of the configuration manifold, prevents the existence of an accumulation point and will be assumed in the following.
    \end{enumerate}
Under these assumptions, our development is valid in a neighborhood of each collision.
\end{remark}

\begin{lemma}{\cite[Corollary 2.3]{FeMaOrWe2003}}
$\Omega(Q,\tilde\tau)=\mathcal T\times\mathcal Q(\tilde\tau)$ is a smooth manifold.
\end{lemma}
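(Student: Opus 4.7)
The plan is to exploit the evident product structure: since $\Omega(Q,\tilde\tau)=\mathcal T\times\mathcal Q(\tilde\tau)$, it suffices to exhibit a smooth (infinite-dimensional) manifold structure on each factor, after which the product inherits one in the obvious way. The first factor is immediate: $\mathcal T$ is cut out of the Fréchet space $C^\infty([\tau_0,\tau_1])$ by the open condition $\alpha_T'(\tau)>0$ for all $\tau\in[\tau_0,\tau_1]$, so it is an open subset and therefore a Fréchet manifold. (If one prefers a Banach setting, the analogous statement for $C^2$-regularity also applies.)

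For the more delicate factor $\mathcal Q(\tilde\tau)$, I would realize it as a fibered product. Splitting $[\tau_0,\tau_1]=[\tau_0,\tilde\tau]\cup[\tilde\tau,\tau_1]$ at the singular time, each curve $\alpha_Q\in\mathcal Q(\tilde\tau)$ corresponds bijectively to a pair $(\alpha_Q^-,\alpha_Q^+)\in C^2([\tau_0,\tilde\tau],Q)\times C^2([\tilde\tau,\tau_1],Q)$ with the compatibility $\alpha_Q^-(\tilde\tau)=\alpha_Q^+(\tilde\tau)\in\partial Q$. Each of the two path spaces is a standard smooth Banach manifold; charts are produced in the usual way, by fixing an auxiliary Riemannian metric on $Q$ and pulling back through the exponential map along a reference curve. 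On each of these manifolds, the evaluation map at $\tilde\tau$, which I denote $\text{ev}^{\pm}:\alpha\mapsto\alpha(\tilde\tau)$, is a smooth submersion onto $Q$.

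The subset $\mathcal Q(\tilde\tau)$ is then obtained in two preimage steps. First, $(\text{ev}^-)^{-1}(\partial Q)$ is a smooth Banach submanifold of $C^2([\tau_0,\tilde\tau],Q)$ by the preimage theorem applied to the submersion $\text{ev}^-$ and the embedded submanifold $\partial Q\subset Q$, and similarly for the right branch. Second, the endpoint-matching condition is the preimage of the diagonal $\Delta_{\partial Q}\subset\partial Q\times\partial Q$ under the joint submersion $\text{ev}^-\times\text{ev}^+$; transversality with the diagonal is automatic because the map is a submersion, so the resulting fiber product carries a smooth Banach manifold structure. Gluing the two $C^2$-pieces along $\tilde\tau$ gives exactly $\mathcal Q(\tilde\tau)$.

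The main obstacle is the soft-analytic verification that the evaluation maps on spaces of $C^2$-curves into a finite-dimensional manifold with boundary are genuine smooth submersions in the Banach category, and that the two preimage constructions truly yield honest submanifolds (rather than only set-theoretic subsets). These facts are standard in the theory of manifolds of mappings, and they are precisely what the cited corollary of \cite{FeMaOrWe2003} records; no further work beyond invoking that result is required.
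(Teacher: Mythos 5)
The paper offers no proof of this lemma beyond the citation to \cite[Corollary 2.3]{FeMaOrWe2003}, and your argument is a faithful reconstruction of that reference's approach: $\mathcal T$ as an open subset of a function space, and $\mathcal Q(\tilde\tau)$ as a fibered product of two $C^2$-path manifolds cut out by the evaluation submersions over $\partial Q$ and the diagonal. This is correct and essentially the same route as the cited result, so nothing further is needed.
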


\begin{remark}
Given $\alpha_T\in\mathcal T$, we denote $[t_0,t_1]=\alpha_T([\tau_0,\tau_1])$ and, in order to distinguish between $\tau$-derivatives and $t$-derivatives, we use different symbols; namely, $\alpha_T'=d\alpha_T/d\tau$ and $\dot\alpha_T^{-1}=d\alpha_T^{-1}/dt$, where $\alpha_T^{-1}:[t_0,t_1]\to[\tau_0,\tau_1]$ is the inverse of $\alpha_T$. Analogously, we denote $\tilde t=\alpha_T(\tilde\tau)$.
\end{remark}

The tangent space of $\mathcal Q(\tilde\tau)$ at $\alpha_Q\in\mathcal Q(\tilde\tau)$ is given by
\begin{multline}\label{eq:TmathcalQ}
\qquad\qquad T_{\alpha_Q}\mathcal Q(\tilde\tau)=\Big\{\nu_{\alpha_Q}\in C^0([\tau_0,\tau_1],TQ)\mid\alpha_Q=\pi_{TQ}\circ\nu_{\alpha_Q},~\nu_{\alpha_Q}(\tilde\tau)\in T_{\alpha_Q(\tilde\tau)}\partial Q,\\
\nu_{\alpha_Q}\text{ is piecewise }C^2\text{ and has only one singularity at $\tilde\tau$}\Big\},\qquad\qquad
\end{multline}
where $\pi_{TQ}:TQ\to Q$ is the natural projection. In order to incorporate the constraint distribution, we define the following subspace at each $\alpha_Q\in\mathcal Q(\tilde\tau)$,
\begin{equation*}
\Delta_{\mathcal Q(\tilde\tau)}(\alpha_Q)=\left\{\nu_{\alpha_Q}\in T_{\alpha_Q}\mathcal Q(\tilde\tau)\mid\nu_{\alpha_Q}:[\tau_0,\tau_1]\to\Delta_Q\right\}.
\end{equation*}
As usual, we denote $T\mathcal Q(\tilde\tau)=\bigsqcup_{\alpha_Q\in\mathcal Q(\tilde\tau)}T_{\alpha_Q}\mathcal Q(\tilde\tau)$ and $\Delta_{\mathcal Q(\tilde\tau)}=\bigsqcup_{\alpha_Q\in\mathcal Q(\tilde\tau)}\Delta_{\mathcal Q(\tilde\tau)}(\alpha_Q)$.

Let $T_{\alpha_Q}'\mathcal Q(\tilde\tau)=\{\phi_{\alpha_Q}:T_{\alpha_Q}\mathcal Q(\tilde\tau)\to\mathbb R\mid\phi_{\alpha_Q}\text{ is linear and continuous}\}$ be the topological dual of $T_{\alpha_Q}\mathcal Q(\tilde\tau)$. Since $\mathcal Q(\tilde\tau)$ is an infinite dimensional manifold, its topological cotangent bundle is too large to formulate mechanics. For that reason, we will restrict ourselves to the vector subbundle where the Legendre transform of the Lagrangian lie, i.e., we consider a vector subbundle $T^\star\mathcal Q(\tilde\tau)\subset T'\mathcal Q(\tilde\tau)$ such that $\mathbb FL\circ\nu_{\alpha_Q}\in T^\star\mathcal Q(\tilde\tau))$ for each $\nu_{\alpha_Q}\in T_{\alpha_Q}\mathcal Q(\tilde\tau)$, where $\mathbb FL:TQ\to T^*Q$ is the Legendre transform of $L$.

\begin{lemma}
For each $\alpha_Q\in\mathcal Q(\tilde\tau)$, the vector space
\begin{multline}\label{eq:T*mathcalQ}
\qquad\qquad T_{\alpha_Q}^\star\mathcal Q(\tilde\tau)=\big\{\pi_{\alpha_Q}\in C^0([\tau_0,\tau_1],T^*Q)\mid\alpha_Q=\pi_{T^*Q}\circ\pi_{\alpha_Q},~\pi_{\alpha_Q}(\tilde\tau)\in T_{\alpha_Q(\tau)}^*\partial Q,\\
\pi_{\alpha_Q}\text{ is piecewise }C^2\text{ and has only one singularity at $\tilde\tau$}\big\},\qquad\qquad
\end{multline}
where $\pi_{T^*Q}:T^*Q\to Q$ is the natural projection, is a vector subspace of the topological dual of $T_{\alpha_Q}\mathcal Q(\tilde\tau)$ by means of the following $L^2$-dual pairing:
\begin{equation*}
\langle\pi_{\alpha_Q},\nu_{\alpha_Q}\rangle=\int_{\tau_0}^{\tau_1}\pi_{\alpha_Q}(\tau)\cdot\nu_{\alpha_Q}(\tau)\,d\tau,
\end{equation*}
where $\cdot$ represents the pairing between $T^*Q$ and $TQ$. Furthermore, this pairing is nondegenerate.
\end{lemma}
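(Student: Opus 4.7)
The plan is to split the lemma into three pieces: first verify that $T_{\alpha_Q}^\star\mathcal Q(\tilde\tau)$ carries a vector space structure, then show that the $L^2$-pairing produces a continuous linear functional on $T_{\alpha_Q}\mathcal Q(\tilde\tau)$ for each element of $T_{\alpha_Q}^\star\mathcal Q(\tilde\tau)$, and finally establish nondegeneracy in both arguments. The injectivity of $\pi_{\alpha_Q}\mapsto\langle\pi_{\alpha_Q},\cdot\rangle$ that follows from nondegeneracy is precisely what identifies $T_{\alpha_Q}^\star\mathcal Q(\tilde\tau)$ with a vector subspace of the topological dual of $T_{\alpha_Q}\mathcal Q(\tilde\tau)$.

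The first two steps are essentially bookkeeping. The three conditions defining $T_{\alpha_Q}^\star\mathcal Q(\tilde\tau)$ in \eqref{eq:T*mathcalQ} (continuity along $\alpha_Q$, piecewise $C^2$ regularity with singularity set in $\{\tilde\tau\}$, and $\pi_{\alpha_Q}(\tilde\tau)\in T_{\alpha_Q(\tilde\tau)}^*\partial Q$) are each preserved by pointwise sums and scalar multiples, so vector space closure is immediate. For the pairing, both $\pi_{\alpha_Q}$ and $\nu_{\alpha_Q}$ are continuous on the compact interval $[\tau_0,\tau_1]$ and the fiber pairing of $T^*Q$ with $TQ$ is smooth, so $\tau\mapsto\pi_{\alpha_Q}(\tau)\cdot\nu_{\alpha_Q}(\tau)$ is a continuous real-valued function on $[\tau_0,\tau_1]$ and hence integrable. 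Linearity in $\nu_{\alpha_Q}$ comes from bilinearity of the fiber pairing and linearity of the integral, and continuity of the induced functional follows from the elementary bound $|\langle\pi_{\alpha_Q},\nu_{\alpha_Q}\rangle|\leq(\tau_1-\tau_0)\sup_\tau|\pi_{\alpha_Q}(\tau)\cdot\nu_{\alpha_Q}(\tau)|$, which is finite in any local trivialization of $TQ$ along $\alpha_Q$.

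For nondegeneracy I would use the standard bump function argument. Assume $\pi_{\alpha_Q}\neq 0$. Continuity of $\pi_{\alpha_Q}$ makes the nonzero set open in $[\tau_0,\tau_1]$, so one can choose $\tau^*\in[\tau_0,\tau_1]\setminus\{\tilde\tau\}$ and an open interval $I\ni\tau^*$ with $\overline I\cap\{\tilde\tau\}=\emptyset$ on which $\pi_{\alpha_Q}$ stays nonzero. Pick $w\in T_{\alpha_Q(\tau^*)}Q$ with $\pi_{\alpha_Q}(\tau^*)\cdot w>0$; after possibly shrinking $I$ so that $\alpha_Q(I)$ lies in a single coordinate chart, extend $w$ by constant coordinates to a smooth vector field along $\alpha_Q|_I$, and by continuity the fiber pairing remains positive on some subinterval $J\subset I$ around $\tau^*$. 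Multiplying by a nonnegative $C^\infty$ bump function on $\tau$ supported in $J$ and positive at $\tau^*$, and extending by zero outside $J$, produces a section $\nu_{\alpha_Q}\in T_{\alpha_Q}\mathcal Q(\tilde\tau)$ (the boundary condition at $\tilde\tau$ is vacuous since the support misses $\tilde\tau$, and the section is smooth so the singularity condition is trivially met) for which $\langle\pi_{\alpha_Q},\nu_{\alpha_Q}\rangle>0$. The reverse direction, starting from $\nu_{\alpha_Q}\neq 0$ and constructing a test covector in $T_{\alpha_Q}^\star\mathcal Q(\tilde\tau)$, is symmetric.

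The main obstacle I anticipate is handling the collision time $\tilde\tau$ and the extra constraint it imposes on elements of both spaces. In principle one might worry that a nonzero $\pi_{\alpha_Q}$ fails at every $\tau\neq\tilde\tau$, which would block the bump function construction; this is precluded by continuity, since the nonzero set is open in $[\tau_0,\tau_1]$ and therefore contains points other than $\tilde\tau$. A parallel remark applies to $\nu_{\alpha_Q}$ in the reverse direction. Once this point is settled the construction proceeds entirely in a coordinate chart disjoint from $\tilde\tau$, so neither the boundary condition at the collision nor the piecewise $C^2$ regularity requirement interferes.
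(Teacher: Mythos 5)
The paper states this lemma without proof, so there is no argument of the authors' to compare against; judged on its own, your proposal is correct and is the standard argument one would supply. The decomposition into (i) closure of \eqref{eq:T*mathcalQ} under pointwise linear operations, (ii) well-definedness and continuity of the induced functional via the sup bound on the compact interval, and (iii) nondegeneracy in both slots by a bump-function construction localized away from $\tilde\tau$, covers everything the statement asserts, and you correctly identify that injectivity of $\pi_{\alpha_Q}\mapsto\langle\pi_{\alpha_Q},\cdot\rangle$ is what realizes $T_{\alpha_Q}^\star\mathcal Q(\tilde\tau)$ inside the topological dual. You also anticipate the only genuine subtlety, namely that the nonzero set of a continuous section cannot be the singleton $\{\tilde\tau\}$, so the test section can always be supported away from the collision time, where the constraint $\nu_{\alpha_Q}(\tilde\tau)\in T_{\alpha_Q(\tilde\tau)}\partial Q$ is vacuously satisfied by the zero vector. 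Two small points of hygiene: closure under addition requires reading ``has only one singularity at $\tilde\tau$'' as ``singularity set contained in $\{\tilde\tau\}$'' (otherwise the sum of two singular sections, or the zero section itself, could fail the condition) --- you adopt this reading, and it is clearly the intended one; and your constant-coefficient extension of the test vector inherits only the piecewise $C^2$ regularity of $\alpha_Q$ rather than smoothness, which is still enough for membership in $T_{\alpha_Q}\mathcal Q(\tilde\tau)$. Neither affects the validity of the proof.
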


Note that, in general, $\left\{\mathbb FL\circ\nu_{\alpha_Q}\in T_{\alpha_Q}^\star\mathcal Q(\tilde\tau)\mid\nu_{\alpha_Q}\in T_{\alpha_Q}\mathcal Q(\tilde\tau)\right\}\subsetneq T_{\alpha_Q}^\star\mathcal Q(\tilde\tau)$, as the Lagrangian is possibly degenerate. As a straightforward consequence of the previous lemma, the vector bundle
\begin{equation*}
T^\star\mathcal Q(\tilde\tau)=\bigsqcup_{\alpha_Q\in\mathcal Q(\tilde\tau)}T_{\alpha_Q}^\star\mathcal Q(\tilde\tau)\to\mathcal Q(\tilde\tau),\quad\pi_{\alpha_Q}\mapsto\alpha_Q,
\end{equation*}
is a vector subbundle of the topological cotangent bundle of $\mathcal Q(\tilde\tau)$. 

In the same vein, for each $\nu_{\alpha_Q}\in T_{\alpha_Q}\mathcal Q(\tilde\tau)$ and $\pi_{\alpha_Q}\in T_{\alpha_Q}^\star\mathcal Q(\tilde\tau)$, the iterated bundles are given by
\begin{multline*}
T_{\nu_{\alpha_Q}}(T\mathcal Q(\tilde\tau))=\Big\{\delta\nu_{\alpha_Q}\in C^0([\tau_0,\tau_1],T(TQ))\mid\nu_{\alpha_Q}=\pi_{T(TQ)}\circ\delta\nu_{\alpha_Q},\\~\delta\nu_{\alpha_Q}(\tilde\tau)\in T_{\nu_{\alpha_Q}(\tilde\tau)}(T\partial Q),\\
\delta\nu_{\alpha_Q}\text{ is piecewise $C^2$ and has only one singularity at }\tilde\tau\Big\},
\end{multline*}
where $\pi_{T(TQ)}:T(TQ)\to TQ$ is the natural projection, and
\begin{multline*}
T_{\pi_{\alpha_Q}}(T^\star\mathcal Q(\tilde\tau))=\Big\{\delta\pi_{\alpha_Q}\in C^0([\tau_0,\tau_1],T(T^*Q))\mid\pi_{\alpha_Q}=\pi_{T(T^*Q)}\circ\delta\pi_{\alpha_Q},\\~\delta\pi_{\alpha_Q}(\tilde\tau)\in T_{\pi_{\alpha_Q}(\tilde\tau)}(T^*\partial Q),\\
\delta\pi_{\alpha_Q}\text{ is piecewise $C^2$ and has only one singularity at }\tilde\tau\Big\},
\end{multline*}
where $\pi_{T(T^*Q)}:T(T^*Q)\to T^*Q$ is the natural projection. In particular, we consider the constrained iterated bundle,
\begin{equation}\label{eq:DeltaTQ}
\Delta_{T\mathcal Q(\tilde\tau)}(\nu_{\alpha_Q})=\left\{\delta\nu_{\alpha_Q}\in T_{\nu_{\alpha_Q}}(T\mathcal Q(\tilde\tau))\mid d\pi_{TQ}\circ\delta\nu_{\alpha_Q}\in C^0([\tau_0,\tau_1],\Delta_Q)\right\}.
\end{equation}

\subsection{Nonholonomic implicit Euler--Lagrange equations with collisions}

Given a path $\alpha=(\alpha_T,\alpha_Q)\in\Omega(Q,\tilde\tau)$, the \emph{associated curve} is defined as
\begin{equation}\label{eq:associatedcurve}
q_\alpha:[t_0,t_1]\to Q,\quad t\mapsto q_\alpha(t)=\left(\alpha_Q\circ\alpha_T^{-1}\right)(t).
\end{equation}
Similarly, given $\nu_{\alpha_Q}\in T_{\alpha_Q}\mathcal Q(\tilde\tau)$ and $\pi_{\alpha_Q}\in T_{\alpha_Q}'\mathcal Q(\tilde\tau)$, we set
\begin{equation*}
\begin{array}{ll}
v_\alpha: [t_0,t_1]\to TQ, \quad & t\mapsto v_\alpha(t)=\left(\nu_{\alpha_Q}\circ\alpha_T^{-1}\right)(t),\\
p_\alpha: [t_0,t_1]\to T^*Q, & t\mapsto p_\alpha(t)=\left(\pi_{\alpha_Q}\circ\alpha_T^{-1}\right)(t).
\end{array}
\end{equation*}
It is clear that $\pi_{TQ}\circ v_\alpha=\pi_{T^*Q}\circ p_\alpha=q_\alpha$.

By regarding $\Omega(Q,\tilde\tau)$ as a trivial vector bundle over $\mathcal Q(\tilde\tau)$ with the projection onto the second factor, the \emph{Lagrange--d'Alembert--Pontryagin action functional},
\begin{equation*}
\mathbb S:\Omega(Q,\tilde\tau)\times_{\mathcal Q(\tilde\tau)}\left(T\mathcal Q(\tilde\tau)\oplus T^\star\mathcal Q(\tilde\tau)\right)\to\mathbb R,
\end{equation*}
where $\times_{\mathcal Q(\tilde\tau)}$ denotes the fibered product over $\mathcal Q(\tilde\tau)$, is defined as
\begin{align*}
\mathbb S\left(\alpha,\nu_{\alpha_Q},\pi_{\alpha_Q}\right) & =\int_{t_0}^{t_1}\left(L(v_\alpha(t))+p_\alpha(t)\cdot\left(\dot q_\alpha(t)-v_\alpha(t)\right)\right)dt\\
& =\int_{\tau_0}^{\tau_1}\left(L\left(\nu_{\alpha_Q}(\tau)\right)+\pi_{\alpha_Q}(\tau)\cdot\left(\frac{\alpha_Q'(\tau)}{\alpha_T'(\tau)}-\nu_{\alpha_Q}(\tau)\right)\right)\alpha_T'(\tau)\,d\tau.
\end{align*}
The equality between the first and the second expressions can be easily checked by considering the change of variable $t=\alpha_T(\tau)$. By recalling that the \emph{energy} of the system is given by
\begin{equation}\label{eq:energy}
E:TQ\oplus T^*Q\to\mathbb R,\qquad(v_q,p_q)\mapsto E(v_q,p_q)=p_q\cdot v_q-L(v_q),
\end{equation}
the action functional may be rewritten as
\begin{align*}
\mathbb S\left(\alpha,\nu_{\alpha_Q},\pi_{\alpha_Q}\right) & =\int_{t_0}^{t_1}\left(p_\alpha(t)\cdot\dot q_\alpha(t)-E(v_\alpha(t),p_\alpha(t)\right)dt\\
& =\int_{\tau_0}^{\tau_1}\left(\pi_{\alpha_Q}(\tau)\cdot\frac{\alpha_Q'(\tau)}{\alpha_T'(\tau)}-E\left(\nu_{\alpha_Q}(\tau),\pi_{\alpha_Q}(\tau)\right)\right)\alpha_T'(\tau)\,d\tau.
\end{align*}

\begin{definition}[Hamilton--d'Alembert--Pontryagin  principle]\label{def:HdAPprinciple}
A path
\begin{equation*}
\texttt{c}=((\alpha_T,\alpha_Q),\nu_{\alpha_Q},\pi_{\alpha_Q})\in\Omega(Q,\tilde\tau)\times_{\mathcal Q(\tilde\tau)}\left(\Delta_{\mathcal Q(\tilde\tau)}\oplus T^\star\mathcal Q(\tilde\tau)\right)
\end{equation*}
is \emph{stationary} (or \emph{critical}) for the action functional $\mathbb S$ if it satisfies
\begin{equation*}
d\mathbb S(\texttt{c})(\delta\texttt{c})=0
\end{equation*}
for every variation $\delta\texttt{c}=\left((\delta\alpha_T,\delta\alpha_Q),\delta\nu_{\alpha_Q},\delta\pi_{\alpha_Q}\right)\in T_\alpha\Omega(Q,\tilde\tau)\times\Delta_{T\mathcal Q(\tilde\tau)}(\nu_{\alpha_Q})\times T_{\pi_{\alpha_Q}}(T^\star\mathcal Q(\tilde\tau))$ such that $\delta\alpha_T(\tau_0)=\delta\alpha_T(\tau_1)=0$, $\delta\alpha_Q(\tau_0)=\delta\alpha_Q(\tau_1)=0$ and
\begin{equation}\label{eq:projectionvariation}
d\pi_{TQ}\circ\delta\nu_{\alpha_Q}=d\pi_{T^*Q}\circ\delta\pi_{\alpha_Q}=\delta\alpha_Q.
\end{equation}
\end{definition}

In order to give an intrinsic expression for the dynamical equations, we fix a linear connection on the tangent bundle of $Q$, $\nabla^Q:\mathfrak X(Q)\to\Omega^1(Q,TQ)$, as well as its dual, which is a linear connection on the cotangent bundle of $Q$, $\nabla^{Q*}:\Omega^1(Q)\to\Omega^1(Q,T^*Q)$. These connections enable us to compute the vertical part of the variations. For instance, given $\alpha_Q\in\mathcal Q(\tilde\tau)$, $\nu_{\alpha_Q}\in T_{\alpha_Q}\mathcal Q(\tilde\tau)$ and $\delta\nu_{\alpha_Q}\in T_{\nu_{\alpha_Q}}(T\mathcal Q(\tilde\tau))$, we write $\delta^Q\nu_{\alpha_Q}=\left(\delta\nu_{\alpha_Q}\right)^v\in T_{\alpha_Q}\mathcal Q(\tilde\tau)$, where the superscript $v$ denotes the vertical part given by $\nabla^Q$, and analogous for the dual connection: $\delta^{Q*}$. Additionally, the covariant derivatives induced by these connections are denoted by $\nabla^Q/d\tau$ and $\nabla^{Q*}/d\tau$, respectively.

The partial functional derivatives of the Lagrangian are denoted by
\begin{equation*}
\begin{array}{ll}
\displaystyle\frac{\delta L}{\delta v_q}:TQ\to T^*Q,\quad & \displaystyle\frac{\delta L}{\delta v_q}(v_q)\cdot w_q=\left.\frac{d}{ds}\right|_{s=0}L(v_q+s\,w_q),\vspace{2mm}\\
\displaystyle\frac{\delta L}{\delta q}:TQ\to T^*Q,\quad & \displaystyle\frac{\delta\ell}{\delta q}(v_q)\cdot w_q=\left.\frac{d}{ds}\right|_{s=0}\left(L\circ\gamma_{v_q}^h\right)(s),
\end{array}
\end{equation*}
for each $v_q,w_q\in T_q Q$, $q\in Q$, where $\gamma:(-\epsilon,\epsilon)\to Q$ is a curve such that $d/ds|_{s=0}\gamma=w_q$ and $\gamma_{v_q}^h:(-\epsilon,\epsilon)\to TQ$ is the horizontal lift at $v_q$ given by $\nabla^Q$. Observe that the first one is a vertical (fiber) derivative, whereas the second one is a horizontal derivative and depends on the choice of the connection.

\begin{theorem}[Nonholonomic implicit Euler--Lagrange equations with collisions]\label{theorem:implicitELequations}
A path
\begin{equation*}
((\alpha_T,\alpha_Q),\nu_{\alpha_Q},\pi_{\alpha_Q})\in\Omega(Q,\tilde\tau)\times_{\mathcal Q(\tilde\tau)}\left(\Delta_{\mathcal Q(\tilde\tau)}\oplus T^\star\mathcal Q(\tilde\tau)\right)
\end{equation*} 
is critical for the action functional $\mathbb S$ if and only if it satisfies the \emph{implicit Euler--Lagrange equations}, or \emph{Lagrange--d'Alembert--Pontryagin equations}:
\begin{equation*}
\left\{\begin{array}{ll}
\displaystyle\frac{\nabla^{Q*}\pi_{\alpha_Q}}{d\tau}-\alpha_T'\frac{\delta L}{\delta q}(\nu_{\alpha_Q})\in\Delta_Q^\circ(\alpha_Q),\qquad & \displaystyle E'(\nu_{\alpha_Q},\pi_{\alpha_Q})=0,\\
\displaystyle\pi_{\alpha_Q}=\frac{\delta L}{\delta v_q}(\nu_{\alpha_Q}), & \displaystyle\nu_{\alpha_Q}=\frac{\alpha_Q'}{\alpha_T'}\in\Delta_Q(\alpha_Q),
\end{array}\right.
\end{equation*}
on $[\tau_0,\tilde\tau)\cup(\tilde\tau,\tau_1]$, together with the conditions for the \emph{elastic impact},
\begin{equation*}
\left\{\begin{array}{l}
\displaystyle \pi_{\alpha_Q}^+-\pi_{\alpha_Q}^-\in\left(T_{\alpha_Q(\tilde\tau)}\partial Q\cap\Delta_Q(\alpha_Q(\tilde\tau))\right)^\circ=\left(T_{\alpha_Q(\tilde\tau)}\partial Q\right)^\circ+\Delta_Q^\circ(\alpha_Q(\tilde\tau)),\vspace{2mm}\\
\displaystyle E(\nu_{\alpha_Q}^+,\pi_{\alpha_Q}^+)=E(\nu_{\alpha_Q}^-,\pi_{\alpha_Q}^-),
\end{array}\right.
\end{equation*}
where the annihilators are with respect to $TQ$ and we denote $\pi_{\alpha_Q}(\tilde\tau^+)=\pi_{\alpha_Q}^+$, etc.
\end{theorem}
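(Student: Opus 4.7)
The plan is to compute $d\mathbb S(\texttt c)(\delta\texttt c)$ directly from the second (reparametrized) expression for $\mathbb S$, split the $\tau$-integral as $\int_{\tau_0}^{\tilde\tau}+\int_{\tilde\tau}^{\tau_1}$ to accommodate the singularity at $\tilde\tau$, integrate by parts on each piece using the connections $\nabla^Q,\nabla^{Q*}$ to obtain intrinsic expressions, and then separate the resulting identity into (a) bulk terms on the two open subintervals, which must vanish for the admissible variations and produce the implicit Euler--Lagrange equations, and (b) boundary terms at $\tilde\tau$, which yield the elastic impact conditions. Throughout, the variations obey $\delta\alpha_T(\tau_0)=\delta\alpha_T(\tau_1)=0$, $\delta\alpha_Q(\tau_0)=\delta\alpha_Q(\tau_1)=0$, $d\pi_{TQ}\circ\delta\nu_{\alpha_Q}=d\pi_{T^*Q}\circ\delta\pi_{\alpha_Q}=\delta\alpha_Q$, and the horizontal projection $d\pi_{TQ}\circ\delta\nu_{\alpha_Q}=\delta\alpha_Q$ takes values in $\Delta_Q$ by \eqref{eq:DeltaTQ}.

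First I would handle the three types of ``interior'' variations on $[\tau_0,\tilde\tau)\cup(\tilde\tau,\tau_1]$ that vanish at $\tilde\tau$. Varying $\delta\pi_{\alpha_Q}$ gives directly the reconstruction/constraint relation $\nu_{\alpha_Q}=\alpha_Q'/\alpha_T'$, which lies in $\Delta_Q$ by construction. Varying the vertical part $\delta^Q\nu_{\alpha_Q}$ (free in $TQ$, not constrained) yields the Legendre relation $\pi_{\alpha_Q}=\delta L/\delta v_q(\nu_{\alpha_Q})$. Varying $\delta\alpha_Q$ (horizontal, constrained to $\Delta_Q$), after integration by parts against $\pi_{\alpha_Q}$ and use of the horizontal derivative of $L$ defined via $\nabla^Q$, yields $\nabla^{Q*}\pi_{\alpha_Q}/d\tau-\alpha_T'\,\delta L/\delta q(\nu_{\alpha_Q})\in\Delta_Q^\circ(\alpha_Q)$ because the test horizontal variations range over $\Delta_Q$. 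Finally, varying $\delta\alpha_T$ in the form $\mathbb S=\int(\pi_{\alpha_Q}\cdot\alpha_Q'/\alpha_T'-E)\alpha_T'\,d\tau$ produces, after the cancellation of $\alpha_T'$ in the first term, precisely $E'(\nu_{\alpha_Q},\pi_{\alpha_Q})=0$ along $[\tau_0,\tilde\tau)\cup(\tilde\tau,\tau_1]$.

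Next I would collect the boundary terms at $\tilde\tau$ produced by all integrations by parts. The only surviving contributions come from $\tau=\tilde\tau^\pm$, and they have the schematic form $\pi_{\alpha_Q}\cdot\delta\alpha_Q\big|_{\tilde\tau^-}^{\tilde\tau^+}$ from the horizontal variation, together with $-E\,\delta\alpha_T\big|_{\tilde\tau^-}^{\tilde\tau^+}$ from the reparametrization variation. The admissible $\delta\alpha_Q(\tilde\tau)$ ranges over $T_{\alpha_Q(\tilde\tau)}\partial Q\cap\Delta_Q(\alpha_Q(\tilde\tau))$ by \eqref{eq:TmathcalQ} and \eqref{eq:DeltaTQ}, so vanishing of the first jump term for all such variations is equivalent to
\begin{equation*}
\pi_{\alpha_Q}^+-\pi_{\alpha_Q}^-\in\bigl(T_{\alpha_Q(\tilde\tau)}\partial Q\cap\Delta_Q(\alpha_Q(\tilde\tau))\bigr)^\circ=\bigl(T_{\alpha_Q(\tilde\tau)}\partial Q\bigr)^\circ+\Delta_Q^\circ(\alpha_Q(\tilde\tau)),
\end{equation*}
where the last equality is the standard duality identity for the sum/intersection of subspaces. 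Similarly, $\delta\alpha_T(\tilde\tau)\in\mathbb R$ is free, forcing $E(\nu_{\alpha_Q}^+,\pi_{\alpha_Q}^+)=E(\nu_{\alpha_Q}^-,\pi_{\alpha_Q}^-)$. The converse direction is routine: given a path satisfying both the bulk equations and the jump conditions, a direct integration by parts in reverse on each subinterval shows $d\mathbb S(\texttt c)(\delta\texttt c)=0$ for every admissible variation.

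The main obstacle I anticipate is bookkeeping at $\tilde\tau$: one must check carefully that the only boundary contributions are those listed above (so that, in particular, no spurious jump in $\alpha_T'$ or $\nu_{\alpha_Q}$ appears) and that the horizontal/vertical split induced by $\nabla^Q$ and $\nabla^{Q*}$ commutes with the evaluation of jumps. Once this is verified, the separation of the stationarity identity into the displayed dynamical equations and elastic-impact conditions follows from the fundamental lemma of the calculus of variations applied on each of the open intervals $(\tau_0,\tilde\tau)$ and $(\tilde\tau,\tau_1)$.
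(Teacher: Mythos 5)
Your proposal is correct and follows essentially the same route as the paper: the authors likewise compute $d\mathbb S$ from the reparametrized integral, split the domain at $\tilde\tau$, integrate by parts via $\nabla^Q,\nabla^{Q*}$, and separate the result into bulk integrals $\mathcal I(\tau_0,\tilde\tau)+\mathcal I(\tilde\tau,\tau_1)$ (yielding the implicit equations from the free variations $\delta^Q\nu_{\alpha_Q},\delta^{Q*}\pi_{\alpha_Q},\delta\alpha_T$ and the constrained $\delta\alpha_Q$) and boundary terms $\mathcal B$ whose jumps at $\tilde\tau^\pm$ give the elastic impact conditions. Your identification of the admissible boundary variations $\delta\alpha_Q(\tilde\tau)\in T_{\alpha_Q(\tilde\tau)}\partial Q\cap\Delta_Q(\alpha_Q(\tilde\tau))$ and the resulting annihilator condition matches the paper's argument exactly.
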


\begin{proof}
Firstly, \eqref{eq:projectionvariation} ensures that the horizontal part of the variations is $\delta\alpha_Q\in\Delta_{\mathcal Q(\tilde\tau)}(\alpha_Q)$. As a result, the variation of the action functional reads
\begin{align*}
d\mathbb S(\alpha,\nu_{\alpha_Q},\pi_{\alpha_Q})\left(\delta\alpha,\delta\nu_{\alpha_Q},\delta\pi_{\alpha_Q}\right) & =\int_{\tau_0}^{\tau_1}\Bigg(\frac{\partial L}{\partial q}\cdot\delta\alpha_Q+\frac{\partial L}{\partial v_q}\cdot\delta^Q\nu_{\alpha_Q}+\delta^{Q*}\pi_{\alpha_Q}\cdot\left(\frac{\alpha_Q'}{\alpha_T'}-\nu_{\alpha_Q}\right)\\
& \hspace{15mm}+\pi_{\alpha_Q}\cdot\left(\frac{\delta^Q\alpha_Q'}{\alpha_T'}-\frac{\alpha_Q'\delta\alpha_T'}{(\alpha_T')^2}-\delta^Q\nu_{\alpha_Q}\right)\Bigg)\,\alpha_T'\,d\tau\\
& +\int_{\tau_0}^{\tau_1}\left(L+\pi_{\alpha_Q}\cdot\left(\frac{\alpha_Q'}{\alpha_T'}-\nu_{\alpha_Q}\right)\right)\delta\alpha_T'\,d\tau,
\end{align*}
where the Lagrangian, as well as its partial derivatives, are evaluated at $\nu_{\alpha_Q}(\tau)$. Note that
\begin{equation*}
\delta^Q\alpha_Q'=\frac{\nabla^Q\alpha_Q}{d\tau},\qquad(\pi_{\alpha_Q}\cdot\delta\alpha_Q)'=\frac{\nabla^{Q*}\pi_{\alpha_Q}}{d\tau}\cdot\delta\alpha_Q+\pi_{\alpha_Q}\cdot\frac{\nabla^Q\delta\alpha_Q}{d\tau}.
\end{equation*}
By using this, splitting the integration domain, $[\tau_0,\tau_1]-\{\tilde\tau\}=[\tau_0,\tilde\tau)\cup(\tilde\tau,\tau_1]$, integrating by parts on each sub-interval and regrouping terms, we may rewrite the previous expression as
\begin{equation*}
d\mathbb S(\alpha,\nu_{\alpha_Q},\pi_{\alpha_Q})\left(\delta\alpha,\delta\nu_{\alpha_Q},\delta\pi_{\alpha_Q}\right)=\mathcal I(\tau_0,\tilde\tau)+\mathcal I(\tilde\tau,\tau_1)+\mathcal B(\tau_0,\tilde\tau^-)+\mathcal B(\tilde\tau^+,\tau_1),
\end{equation*}
where for each $a,b\in\mathbb R$, $a<b$, we set
\begin{align*}
\mathcal I(a,b) & =\int_a^b\Bigg(\left(\alpha_T'\frac{\delta L}{\delta q}-\frac{\nabla^{Q*}\pi_{\alpha_Q}}{d\tau}\right)\cdot\delta\alpha_Q+\alpha_T'\left(\frac{\delta L}{\delta v_q}-\pi_{\alpha_Q}\right)\cdot\delta^Q\nu_{\alpha_Q}\\
& \hspace{15mm}+\delta^{Q*}\pi_{\alpha_Q}\left(\frac{\alpha_Q'}{\alpha_T'}-\nu_{\alpha_Q}\right)\alpha_T'+E'\,\delta\alpha_T\Bigg)\,d\tau\\
\mathcal B(a,b)& =\Big[\pi_{\alpha_Q}(\tau)\cdot\delta\alpha_Q(\tau)-E\,\delta\alpha_T(\tau)\Big]_{\tau=a}^{\tau=b}.
\end{align*}
where the energy is evaluated at $(\nu_{\alpha_Q}(\tau),\pi_{\alpha_Q}(\tau))$. Since the previous expression vanishes for arbitrary variations $\delta\alpha_T\in T_{\alpha_T}\mathcal T$ and $\delta\alpha_Q\in\Delta_{\mathcal Q(\tilde\tau)}(\alpha_Q)$ vanishing at the endpoints, as well as arbitrary variations $\delta^Q\nu_{\alpha_Q}\in T_{\alpha_Q}\mathcal Q(\tilde\tau)$, $\delta^{Q*}\pi_{\alpha_Q}\in T_{\alpha_Q}^\star\mathcal Q(\tilde\tau)$, we obtain the desired equations, together with the impact conditions.
\end{proof}

By using the change of variable $t=\alpha_T(\tau)$, we have $\dot q_\alpha=\alpha_Q'/\alpha_T'$ and $\dot p_\alpha=\pi_{\alpha_Q}'/\alpha_T'$. Then, the implicit Euler--Lagrange equations for a curve $(v_\alpha,p_\alpha):[t_0,t_1]\to TQ\oplus T^*Q$ take the form
\begin{equation}\label{eq:ELeqsimplicit}\left\{\begin{array}{ll}
\displaystyle\frac{\nabla^{Q*} p_\alpha}{dt}-\frac{\delta L}{\delta q}(v_\alpha)\in\Delta_Q^\circ(q_\alpha),\qquad & \displaystyle \dot E(v_\alpha,p_\alpha)=0,\vspace{0.1cm}\\
\displaystyle p_\alpha=\frac{\delta L}{\delta v_\alpha}(v_\alpha), & \displaystyle v_\alpha=\dot q_\alpha\in\Delta_Q(q_\alpha),
\end{array}\right.
\end{equation}
on $\left[t_0,\tilde t\right)\cup\left(\tilde t,t_1\right]$. Similarly, the conditions for the elastic impact read
\begin{align}\label{eq:impactL}
& p_\alpha^+-p_\alpha^-\in\left(T_{q_\alpha\left(\tilde t\right)}\partial Q\cap\Delta_Q\left(q_\alpha\left(\tilde t\right)\right)\right)^\circ=\left(T_{q_\alpha\left(\tilde t\right)}\partial Q\right)^\circ+\Delta_Q^\circ\left(q_\alpha\left(\tilde t\right)\right),\\\nonumber
& E\left(v_\alpha^+,p_\alpha^+\right)=E\left(v_\alpha^-,p_\alpha^-\right),\\\label{eq:impactDelta}
& v_\alpha^+=\dot q_\alpha^+\in\Delta_Q\left(q_\alpha\left(\tilde t\right)\right),
\end{align}
where we denote $p_\alpha\left(\tilde t^+\right)=p_\alpha^+$, etc.

For unconstrained systems, i.e., $\Delta_Q=TQ$, the Hamilton--d'Alembert--Pontryagin principle reduces to the \emph{Hamilton--Pontryagin principle}, and the implicit Euler--Lagrange equations of motion read as
\begin{equation*}\left\{\begin{array}{ll}
\displaystyle\frac{\nabla^{Q*}p_\alpha}{dt}=\frac{\delta L}{\delta q}(v_\alpha),\qquad & \displaystyle \dot E(v_\alpha,p_\alpha)=0,\vspace{0.1cm}\\
\displaystyle p_\alpha=\frac{\delta L}{\delta v_q}(v_\alpha), & \displaystyle v_\alpha=\dot q_\alpha.
\end{array}\right.
\end{equation*}

\begin{remark}[Energy balance]
The conservation of the energy, $\dot E(v_\alpha,p_\alpha)=0$, is redundant, as it may be obtained from the remaining equations:
\begin{align*}
\dot E(v_\alpha,p_\alpha) & =\frac{\nabla^{Q*}p_\alpha}{dt}\cdot v_\alpha+p_\alpha\cdot\frac{\nabla^Q v_\alpha}{dt}-\frac{\delta L}{\delta q}\cdot\dot q_\alpha-\frac{\delta L}{\delta v_q}\cdot\frac{\nabla^Q v_\alpha}{dt}\\
& =\left(\frac{\nabla^{Q*}p_\alpha}{dt}-\frac{\delta L}{\delta q}\right)\cdot v_\alpha=0,
\end{align*}
where the last equality from the fact that $v_\alpha\in\Delta_Q(q_\alpha)$ while $\nabla^{Q*}p_\alpha/dt-\delta L/\delta q\in\Delta_Q^\circ(q_\alpha)$. Consequently, this equation may be omitted.
\end{remark}


\subsection{Example: The rolling disk hitting a circular surface}\label{ex:disk}

Let us consider a disk rolling without slipping, as in \cite[\S 7.1]{YoMa2006a} or \cite[\S VI]{simoes2023hamel}, that is confined to move in a solid circle. The configuration space is thus given by
\begin{equation*}
Q=\{(x,y,\theta,\varphi)\in\mathbb R^2\times S^1\times S^1\mid (x+R\cos\varphi)^2+(y+R\sin\varphi)^2\leq 1\},
\end{equation*}
where $(x,y)$ denotes the contact point of the disk with the ground, $\theta$ denotes the angle of rotation and $\varphi$ denotes the heading angle of the disk with respect to the $x$-axis. The Lagrangian $L:TQ\to\mathbb R$ is given by
\begin{equation*}
L(x,y,\theta,\varphi;v_x,v_y,v_\theta,v_\varphi)=\frac{1}{2}m\left(v_x^2+v_y^2\right)+\frac{1}{2}\left(I\,v_\theta^2+J\,v_\varphi^2\right),
\end{equation*}
where $m,I,J\in\mathbb R^+$ are the mass and the moments of inertia of the disk, respectively. For each $(v_q,p_q)=(x,y,\theta,\varphi;v_x,v_y,v_\theta,v_\varphi;p_x,p_y,p_\theta,p_\varphi)\in TQ\oplus T^*Q$, the energy reads
\begin{align*}
E(v_q,p_q)= p_x\,v_x+p_y\,v_y+p_\theta\,v_\theta+p_\varphi\,v_\varphi-\frac{1}{2}m\left(v_x^2+v_y^2\right)-\frac{1}{2}\left(I\,v_\theta^2+Jv_\varphi^2\right).
\end{align*}
In addition, the non-slipping condition reads $v_x=R\, v_\theta\cos\varphi,$ $v_y=R\,v_\theta\sin\varphi$, where $R\in\mathbb R^+$ is the radius of the disk, thus yielding the following non-holonomic constraint:
\begin{equation*}
\Delta_Q=\operatorname{span}\{\partial_\theta+R\cos\varphi\,\partial_x+R\sin\varphi\,\partial_y,\partial_\varphi\},\quad\Delta_Q^\circ=\operatorname{span}\{dx-R\cos\varphi\,d\theta,dy-R\sin\varphi\,d\theta\}.
\end{equation*}
On the other hand, the boundary of the configuration manifold is given by
\begin{equation*}
\partial Q=\{(x,y,\theta,\varphi)\in\mathbb R^2\times S^1\times S^1\mid (x+R\cos\varphi)^2+(y+R\sin\varphi)^2=1\}.
\end{equation*}
Hence, the tangent bundle of the boundary reads
\begin{equation*}
T\partial Q=\operatorname{span}\{R(x\sin\varphi-y\cos\varphi)\partial_x+(x+R\cos\varphi)\partial_\varphi,R(x\sin\varphi-y\cos\varphi)\partial_y+(y+R\sin\varphi)\partial_\varphi,\partial_\theta\}.
\end{equation*}
and its annihilator is given by
\begin{equation*}
(T\partial Q)^\circ=\operatorname{span}\{(x+R\cos\varphi)dx+(y+R\sin\varphi)dy+R(-x\sin\varphi+y\cos\varphi)d\varphi\}.
\end{equation*}

Lastly, since $S^1$ is a Lie group, its tangent bundle is trivial and, thus, the canonical flat connection on the tangent bundle of $Q$ may be chosen, i.e., $\nabla^Q/dt=d/dt$. By gathering all, the implicit Euler--Lagrange equations \eqref{eq:ELeqsimplicit} for a curve
\begin{equation*}
(x,y,\theta,\varphi;v_x,v_y,v_\theta,v_\varphi;p_x,p_y,p_\theta,p_\varphi):[t_0,t_1]\to TQ\oplus T^*Q    
\end{equation*}
read
\begin{equation*}
\left\{\begin{array}{ll}
\dot p_x=\mu_1,\qquad & \dot p_y=\mu_2,\\
\dot p_\theta=-\mu_1\,R\cos\varphi-\mu_2\, R\sin\varphi,\qquad & \dot p_\varphi=0,\\
p_x=m\,v_x,\qquad & p_y=m\,v_y,\\
p_\theta=I\,v_\theta,\qquad & p_\varphi=J\,v_\varphi,\\
v_x=\dot x=\mu_3\,R\cos\varphi,\qquad & v_y=\dot y=\mu_3\,R\sin\varphi,\\
v_\theta=\dot\theta=\mu_3,\qquad & v_\varphi=\dot\varphi=\mu_4,\\
\end{array}\right.
\end{equation*}
on $[t_0,t_1]-\left\{\tilde t\right\}$, where $\mu_1,\mu_2,\mu_3,\mu_4\in\mathbb R$ are the Lagrange multipliers. The impact condition at $t=\tilde t$ given in \eqref{eq:impactL} reads
\begin{equation*}
\left\{\begin{array}{ll}
p_x^+-p_x^-=\lambda_0(x+R\cos\varphi)+\lambda_1,\qquad & p_\theta^+-p_\theta^-=-\lambda_1\,R\cos\varphi-\lambda_2\,R\sin\varphi,\\
p_y^+-p_y^-=\lambda_0(y+R\sin\varphi)+\lambda_2,\qquad & p_\varphi^+-p_\varphi^-=\lambda_0\,R(-x\sin\varphi+y\cos\varphi),\\
\end{array}\right.
\end{equation*}
where we denote $p_x^+=p_x\left(\tilde t^+\right)$, etc., and $\lambda_0,\lambda_1,\lambda_2\in\mathbb R$ are the Lagrange multipliers. Similarly, the condition \eqref{eq:impactDelta} reads
\begin{equation*}
v_x^+=\lambda_3\,R\cos\varphi,\qquad v_y^+=\lambda_3\,R\sin\varphi,\qquad v_\theta^+=\lambda_3,\qquad v_\varphi^+=\lambda_4,
\end{equation*}
where $v_x^+=v_x\left(\tilde t^+\right)$, etc., and $\lambda_3,\lambda_4\in\mathbb R$ are the Lagrange multipliers.

\subsection{Example: The spherical pendulum hitting a cylindrical surface}\label{ex:pendulum}

Let us consider a spherical pendulum (cf. \cite[\S 5.1]{EyCoBl2021}) hitting a cylindrical surface. The configuration space of the system and its boundary are given by
\begin{equation*}
Q=\{(\theta,\varphi)\in S^2\mid L\sin\theta\leq 1\},\qquad\partial Q=\{(\theta,\varphi)\in S^2\mid L\sin\theta=1\},
\end{equation*}
where $0<L<1$ is the length of the pendulum, and the Lagrangian reads
\begin{equation}\label{eq:lagrangianpendulum}
L(\theta,\varphi;v_\theta,v_\varphi)=\frac{1}{2}mL^2(v_\theta^2+v_\varphi^2\sin^2\theta)-mgL\cos\theta,
\end{equation}
where $m,g\in\mathbb R^+$ are the mass of the pendulum and the gravitational acceleration, respectively. Let us suppose that the polar and azimuthal velocities are proportionally related by a function depending only on the polar angle, i.e., $v_\varphi=f(\theta) v_\theta$ with $f(\theta)>0$ for each $\theta\in\mathbb R$. This results in the following non-holonomic constraint:
\begin{equation}\label{eq:constraintpendulum}
\Delta_Q=\operatorname{span}\{\partial_\theta+f(\theta)\partial_\varphi\},\qquad\Delta_Q^\circ=\operatorname{span}\{f(\theta)d\theta-d\varphi\}.
\end{equation}
In addition, the tangent bundle of the reduced space and the corresponding annihilator read
\begin{equation*}
T\partial Q=\operatorname{span}\{\partial_\varphi\},\qquad (T\partial Q)^\circ=\operatorname{span}\{d\theta\}.
\end{equation*}
By considering the flat connection in these coordinates, the implicit Euler--Lagrange equations \eqref{eq:ELeqsimplicit} for a curve
\begin{equation*}
(\theta,\varphi;v_\theta,v_\varphi;p_\theta,p_\varphi):[t_0,t_1]\to TQ\oplus T^*Q,
\end{equation*}
read
\begin{equation*}
\left\{\begin{array}{ll}
\displaystyle \dot p_\theta-mL\sin\theta(Lv_\varphi^2\cos\theta+g)=\mu_0 f(\theta),\qquad & \displaystyle\dot p_\varphi=-\mu_0\vspace{0.1cm}\\
\displaystyle p_\theta=mL^2v_\theta, & \displaystyle p_\varphi=mL^2v_\varphi\sin^2\theta\vspace{0.1cm}\\
v_\theta=\dot\theta=\mu_1,\qquad & v_\varphi=\dot\varphi=f(\theta)\mu_1.
\end{array}\right.
\end{equation*}
on $[t_0,t_1]-\left\{\tilde t\right\}$, where $\mu_0,\mu_1\in\mathbb R$ are the Lagrange multipliers. The impact condition at $t=\tilde t$ given in \eqref{eq:impactL} reads
\begin{equation*}
p_\theta^+-p_\theta^-=\lambda_0+f(\theta)\lambda_1,\qquad p_\varphi^+-p_\varphi^-=-\lambda_1.
\end{equation*}
where we denote $p_\theta^+=p_\theta\left(\tilde t^+\right)$, etc., and $\lambda_0,\lambda_1\in\mathbb R$ are the Lagrange multipliers. Similarly, the condition \eqref{eq:impactDelta} reads
\begin{equation*}
v_\theta^+=\dot\theta^+=\lambda_3,\qquad v_\varphi^+=\dot\varphi^+=f(\theta)\lambda_3.
\end{equation*}
where $v_\theta^+=v_\theta\left(\tilde t^+\right)$, etc., and $\lambda_3\in\mathbb R$ is the Lagrange multiplier.

\section{Nonholonomic implicit Lagrange--Poincaré reduction with collisions}\label{sec4}

Let $G$ be a Lie group and $\Phi:G\times Q\to Q$ be a free and proper left action. The corresponding quotient, which is a smooth manifold, is denoted by $\Sigma=Q/G$. Furthermore, the quotient projection, $\pi_{Q,\Sigma}:Q\to \Sigma$, is a principal $G$-bundle. As usual, we denote $\Phi(g,q)=\Phi_g(q)=g\cdot q$ for each $(g,q)\in G\times Q$. Henceforth, the equivalence classes induced by the $G$-action on the different spaces are denoted by square brackets $[\cdot]_G$, e.g., $[q]_G\in\Sigma$ for each $q\in Q$ and $[v_q]_G\in (TQ)/G$ for each $v_q\in TQ$.

\begin{remark}
Since $\Phi_g:Q\to Q$ is a diffeomorphism for each $g\in G$, it leaves the boundary invariant, i.e., $\Phi_g(\partial Q)=\partial Q$. Therefore, $\Phi$ induces a left $G$-action on the boundary, $\Phi_\partial=\Phi|_{G\times\partial Q}:G\times\partial Q\to\partial Q$. This action is again free and proper and, thus, the quotient is a smooth manifold, $(\partial Q)/G$. In fact, this quotient is nothing but the boundary of $\Sigma=Q/G$, i.e., $\partial\Sigma=(\partial Q)/G$. Observe that, in particular, we have that $\dim G<\dim Q$.
\end{remark}

The moment map induced by this action, $\mathbf J:T^*Q\to\mathfrak g^*$, is denoted by the condition
\begin{equation*}
\mathbf J(p_q)\cdot\xi=p_q\cdot\xi_q^*,\qquad q\in Q,~p_q\in T^*Q,~\xi\in\mathfrak g,
\end{equation*}
where $\xi^*\in\mathfrak X(Q)$ denotes the infinitesimal generator (or fundamental vector field) of $\xi$, i.e., $\xi_q^*=d/dt|_{t=0}\exp(t\xi)\cdot q$, being $\exp:\mathfrak g\to G$ the exponential map. In addition, the $G$-action may be lifted to the tangent and the cotangent bundles of $Q$,
\begin{equation}\label{eq:GTQ}
\left\{\begin{array}{ll}
G\times TQ\to TQ,\quad & (g,v_q)\mapsto\left(d\Phi_g\right)_q(v_q),\\
G\times T^*Q\to T^*Q,\quad & (g,p_q)\mapsto\left(d\Phi_{g^{-1}}\right)_{g\cdot q}^*(p_q),
\end{array}\right.
\end{equation}
where $\left(d\Phi_{g^{-1}}\right)_{g\cdot q}^*:T_q^*Q\to T_{g\cdot q}^*Q$ is the adjoint map of $\left(d\Phi_{g^{-1}}\right)_{g\cdot q}:T_{g\cdot q} Q\to T_q Q$. In turn, these actions may be lifted to the iterated bundles, yielding $G\times T(TQ)\to T(TQ)$ and $G\times T(T^*Q)\to T(T^*Q)$.

\subsection{Reduced configuration space and reduced constraint distribution}

 Let $\omega\in\Omega^1(Q,\mathfrak g)$ be a principal connection on $\pi_{Q,\Sigma}:Q\to \Sigma$, where $\mathfrak g$ is the Lie algebra of $G$. Let us denote by $\tilde Q=\pi_{Q,\Sigma}^*\left(T\Sigma \right)$ the pullback of the tangent bundle of $\Sigma$ by the quotient projection. In the same fashion, we denote $\tilde Q^*=\pi_{Q,\Sigma}^*\left(T^*\Sigma \right)$. The principal connection thus induces the following trivializations of the tangent and the cotangent bundles of $Q$, 
\begin{equation}\label{eq:trivialTQ}
\left\{\begin{array}{ll}
TQ\simeq\tilde Q\times\mathfrak g,\quad & v_q\mapsto\left((q,(d\pi_{Q,\Sigma})_q(v_q)),\omega_q(v_q)\right),\\
T^*Q\simeq\tilde Q^*\times\mathfrak g^*,\quad & p_q\mapsto\left((q,\texttt{H}_q^*(p_q)),\mathbf J(p_q)\right),
\end{array}\right.
\end{equation}
where $\texttt{H}_q^*:T_q^*Q\to T_\sigma^*\Sigma $ is the dual map of the horizontal lift induced by $\omega$, $\texttt{H}_q:T_\sigma\Sigma \to T_q Q$, being $\sigma=\pi_{Q,\Sigma}(q)$. Under these identifications, it is easy to check that the $G$-actions \eqref{eq:GTQ} read as 
\begin{equation}\label{eq:trivialGactions}
\left\{\begin{array}{ll}
g\cdot((q,v_\sigma),\xi)=((g\cdot q,v_\sigma),\Ad_g(\xi)),\qquad & (q,v_\sigma)\in\tilde Q,~\xi\in\mathfrak g,\\
g\cdot((q,p_\sigma),\rho)=((g\cdot q,p_\sigma),\Ad_{g^{-1}}^*(\rho)),\qquad & (q,p_\sigma)\in\tilde Q^*,~\rho\in\mathfrak g^*,
\end{array}\right.
\end{equation}
for each $g\in G$, where $\Ad:G\to\operatorname{Aut}(\mathfrak g)$ and $\Ad^*:G\to\operatorname{Aut}(\mathfrak g^*)$ denote the adjoint and the coadjoint representations of $G$, respectively. As a result, we have the following identifications for the quotient spaces,
\begin{equation}\label{eq:TQG}
\left\{\begin{array}{ll}
(TQ)/G\simeq T\Sigma\oplus\tilde{\mathfrak g},\quad & [v_q]\mapsto\left((d\pi_{Q,\Sigma})_q(v_q),[q,\omega_q(v_q)]_G\right)\\
(T^*Q)/G\simeq T^*\Sigma \oplus\tilde{\mathfrak g}^*,\quad & [p_q]_G\mapsto\left(\texttt{H}_q^*(p_q),[q,\mathbf J(p_q)]_G\right),
\end{array}\right.
\end{equation}
where $\tilde{\mathfrak g}=(Q\times\mathfrak g)/G$ and $\tilde{\mathfrak g}^*=(Q\times\mathfrak g^*)/G$ are the adjoint and coadjoint bundles, respectively.

Given $[\tau_0,\tau_1]\subset\mathbb R$ and $\tilde\tau\in[\tau_0,\tau_1]$, path spaces analogous to \eqref{eq:mathcalQ}, \eqref{eq:TmathcalQ} and \eqref{eq:T*mathcalQ} may be defined by exchanging $Q$ by $\Sigma$. Such spaces are denoted by $\boldsymbol\Sigma(\tilde\tau)$, $T_{\alpha_\Sigma}\boldsymbol\Sigma(\tilde\tau)$ and $T_{\alpha_\Sigma}^\star\boldsymbol\Sigma(\tilde\tau)$, respectively, where $\alpha_\Sigma\in\boldsymbol\Sigma(\tilde\tau)$. Hence, the \emph{reduced path space} is given by $\Omega(\Sigma,\tilde\tau)=\mathcal T\times\boldsymbol\Sigma(\tilde\tau)$. In the same vein, we define
\begin{multline*}
\qquad\qquad\qquad\tilde{\boldsymbol{\mathfrak g}}(\tilde\tau)_{\alpha_\Sigma}=\big\{\xi_{\alpha_\Sigma}\in C^0\left([\tau_0,\tau_1],\tilde{\mathfrak g}\right)\mid\pi_{\tilde{\mathfrak g}}\circ\xi_{\alpha_\Sigma}=\alpha_\Sigma,\\
\xi_{\alpha_\Sigma}\text{ is piecewise }C^2\text{ and has only one singularity at }\tilde\tau\big\},\qquad\qquad\qquad
\end{multline*}
where $\pi_{\tilde{\mathfrak g}}:\tilde{\mathfrak g}\to\Sigma$ is the natural projection, and analogous for $\tilde{\boldsymbol{\mathfrak g}}^\star(\tilde\tau)_{\alpha_\Sigma}$. As a result, we obtain two vector bundles over $\boldsymbol\Sigma(\tilde\tau)$; namely, $\tilde{\boldsymbol{\mathfrak g}}(\tilde\tau)=\bigsqcup_{\alpha_\Sigma\in\boldsymbol\Sigma(\tilde\tau)}\tilde{\boldsymbol{\mathfrak g}}(\tilde\tau)_{\alpha_\Sigma}\to\boldsymbol\Sigma(\tilde\tau)$ and $\tilde{\boldsymbol{\mathfrak g}}^\star(\tilde\tau)=\bigsqcup_{\alpha_\Sigma\in\boldsymbol\Sigma(\tilde\tau)}\tilde{\boldsymbol{\mathfrak g}}^\star(\tilde\tau)_{\alpha_\Sigma}\to\boldsymbol\Sigma(\tilde\tau)$.

Lastly, let $\Delta_Q\subset TQ$ be a $G$-equivariant constraint distribution, i.e.,
\begin{equation*}\label{eq:equivarianceDelta}
\Delta_Q(g\cdot q)=(d\Phi_g)_q(\Delta_Q(q)),\qquad g\in G,~q\in Q.
\end{equation*}
Hence, it may be dropped to a distribution on the reduced bundle, $[\Delta_Q]_G\subset(TQ)/G$. By means of \eqref{eq:TQG}, we may write
\begin{equation}\label{eq:reducedDeltaQ}
[\Delta_Q]_G\simeq\Delta_\Sigma\oplus\Delta_{\tilde{\mathfrak g}}\subset T\Sigma\oplus\tilde{\mathfrak g},    
\end{equation}
where $\Delta_\Sigma$ and $\Delta_{\tilde{\mathfrak g}}$ are the \emph{horizontal} and the \emph{vertical reduced constraint distributions}. This way, constrained paths may be considered:
\begin{align*}
\Delta_{\boldsymbol\Sigma(\tilde\tau)}(\alpha_\Sigma) & =\left\{\nu_{\alpha_\Sigma}\in T_{\alpha_\Sigma}\boldsymbol\Sigma(\tilde\tau)\mid\nu_{\alpha_\Sigma}:[\tau_0,\tau_1]\to\Delta_\Sigma\right\},\\
\Delta_{\tilde{\boldsymbol{\mathfrak g}}(\tilde\tau)}(\alpha_\Sigma) & =\left\{\xi_{\alpha_\Sigma}\in\tilde{\boldsymbol{\mathfrak g}}(\tilde\tau)_{\alpha_\Sigma}\mid\xi_{\alpha_\Sigma}:[\tau_0,\tau_1]\to\Delta_{\tilde{\mathfrak g}}\right\},
\end{align*}
for each $\alpha_\Sigma\in\boldsymbol\Sigma(\tilde\tau)$. Note that we obtain vector subbundles $\Delta_{\boldsymbol\Sigma(\tilde\tau)}\subset T\boldsymbol\Sigma(\tilde\tau)$ and $\Delta_{\tilde{\boldsymbol{\mathfrak g}}(\tilde\tau)}\subset\tilde{\boldsymbol{\mathfrak g}}(\tilde\tau)$. The following lemma is now straightforward.

\begin{lemma}\label{lemma:reducedpath}
In the above conditions, let
\begin{equation*}
\texttt{c}=((\alpha_T,\alpha_Q),\nu_{\alpha_Q},\pi_{\alpha_Q})\in\Omega(Q,\tilde\tau)\times_{\mathcal Q(\tilde\tau)}\left(\Delta_{\mathcal Q(\tilde\tau)}\oplus T^\star\mathcal Q(\tilde\tau)\right).    
\end{equation*}
Then the reduced path is given by
\begin{equation*}
[\texttt{c}]_G\simeq\big((\alpha_T,\alpha_\Sigma),(\nu_{\alpha_\Sigma},\xi_
{\alpha_\Sigma}),(\pi_{\alpha_\Sigma},\rho_{\alpha_\Sigma})\big)\in\Omega(\Sigma,\tilde\tau)\times_{\boldsymbol\Sigma(\tilde\tau)}\Big(\big(\Delta_{\boldsymbol\Sigma(\tilde\tau)}\oplus\Delta_{\tilde{\boldsymbol{\mathfrak g}}(\tilde\tau)}\big)\oplus\big(T^\star\boldsymbol\Sigma(\tilde\tau)\oplus\tilde{\boldsymbol{\mathfrak g}}^\star(\tilde\tau)\big)\Big),  
\end{equation*}
where $\alpha_\Sigma=[\alpha_Q]_G$, $(\nu_{\alpha_\Sigma},\xi_{\alpha_\Sigma})\simeq[\nu_{\alpha_Q}]_G$ and $(\pi_{\alpha_\Sigma},\rho_{\alpha_\Sigma})\simeq[\pi_{\alpha_Q}]_G$.
\end{lemma}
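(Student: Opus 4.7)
The plan is to argue that the decomposition claimed in the lemma is obtained by applying the bundle isomorphisms already established in the section pointwise in $\tau$, and then to check that the regularity and boundary conditions defining the reduced path spaces are preserved.

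First, I would set $\alpha_\Sigma=\pi_{Q,\Sigma}\circ\alpha_Q$ and observe that $\alpha_\Sigma\in\boldsymbol\Sigma(\tilde\tau)$: it is continuous on $[\tau_0,\tau_1]$, piecewise $C^2$ with a single singularity at $\tilde\tau$ because $\pi_{Q,\Sigma}$ is smooth and $\alpha_Q$ has these properties, and $\alpha_\Sigma(\tilde\tau)\in\partial\Sigma$ follows from the fact that $\Phi_g(\partial Q)=\partial Q$ for all $g\in G$, so $\partial\Sigma=(\partial Q)/G$. Together with $\alpha_T\in\mathcal T$, this gives the first component $(\alpha_T,\alpha_\Sigma)\in\Omega(\Sigma,\tilde\tau)$.

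Next I would apply the trivialization $TQ\simeq\tilde Q\times\mathfrak g$ from \eqref{eq:trivialTQ} pointwise to $\nu_{\alpha_Q}$, which yields for each $\tau$ the pair $\bigl(\nu_{\alpha_\Sigma}(\tau),\xi_{\alpha_\Sigma}(\tau)\bigr)=\bigl((d\pi_{Q,\Sigma})_{\alpha_Q(\tau)}(\nu_{\alpha_Q}(\tau)),\omega_{\alpha_Q(\tau)}(\nu_{\alpha_Q}(\tau))\bigr)$, viewed modulo $G$ via \eqref{eq:TQG}; the smoothness of $d\pi_{Q,\Sigma}$ and $\omega$ transfers the piecewise $C^2$-regularity with a single singularity at $\tilde\tau$ from $\nu_{\alpha_Q}$ to both $\nu_{\alpha_\Sigma}$ and $\xi_{\alpha_\Sigma}$. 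The boundary condition $\nu_{\alpha_Q}(\tilde\tau)\in T_{\alpha_Q(\tilde\tau)}\partial Q$ descends to $\nu_{\alpha_\Sigma}(\tilde\tau)\in T_{\alpha_\Sigma(\tilde\tau)}\partial\Sigma$ under $d\pi_{Q,\Sigma}$, so $\nu_{\alpha_\Sigma}\in T_{\alpha_\Sigma}\boldsymbol\Sigma(\tilde\tau)$ while $\xi_{\alpha_\Sigma}\in\tilde{\boldsymbol{\mathfrak g}}(\tilde\tau)_{\alpha_\Sigma}$. The analogous argument, using the dual trivialization $T^*Q\simeq\tilde Q^*\times\mathfrak g^*$ and the horizontal--momentum splitting, produces $(\pi_{\alpha_\Sigma},\rho_{\alpha_\Sigma})\simeq[\pi_{\alpha_Q}]_G$ with $\pi_{\alpha_\Sigma}\in T_{\alpha_\Sigma}^\star\boldsymbol\Sigma(\tilde\tau)$ and $\rho_{\alpha_\Sigma}\in\tilde{\boldsymbol{\mathfrak g}}^\star(\tilde\tau)_{\alpha_\Sigma}$.

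Finally I would translate the constraint. Since $\Delta_Q$ is $G$-equivariant, \eqref{eq:reducedDeltaQ} gives $[\Delta_Q]_G\simeq\Delta_\Sigma\oplus\Delta_{\tilde{\mathfrak g}}$, so $\nu_{\alpha_Q}(\tau)\in\Delta_Q(\alpha_Q(\tau))$ for every $\tau$ is equivalent to $\nu_{\alpha_\Sigma}(\tau)\in\Delta_\Sigma$ and $\xi_{\alpha_\Sigma}(\tau)\in\Delta_{\tilde{\mathfrak g}}$ for every $\tau$; by the definitions given just before the lemma, this is precisely $(\nu_{\alpha_\Sigma},\xi_{\alpha_\Sigma})\in\Delta_{\boldsymbol\Sigma(\tilde\tau)}(\alpha_\Sigma)\oplus\Delta_{\tilde{\boldsymbol{\mathfrak g}}(\tilde\tau)}(\alpha_\Sigma)$. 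The bundle structure on the right-hand side of the lemma is then obtained by assembling the fibered products over $\boldsymbol\Sigma(\tilde\tau)$.

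The step I expect to require most care is not any single calculation, but the bookkeeping that the pointwise identifications genuinely assemble into isomorphisms of the path-space bundles, in particular verifying that the splitting of the constraint preserved by \eqref{eq:reducedDeltaQ} matches the definition of $\Delta_{\boldsymbol\Sigma(\tilde\tau)}\oplus\Delta_{\tilde{\boldsymbol{\mathfrak g}}(\tilde\tau)}$ and that the singularity structure at $\tilde\tau$ (rather than being smoothed out or duplicated by the connection-induced splitting) is exactly the same before and after reduction; once this is in place the equivalences $\alpha_\Sigma=[\alpha_Q]_G$, $(\nu_{\alpha_\Sigma},\xi_{\alpha_\Sigma})\simeq[\nu_{\alpha_Q}]_G$, $(\pi_{\alpha_\Sigma},\rho_{\alpha_\Sigma})\simeq[\pi_{\alpha_Q}]_G$ are immediate.
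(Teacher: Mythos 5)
Your proposal is correct and follows exactly the route the paper intends: the paper gives no explicit proof, stating only that the lemma is ``now straightforward'' from the identifications \eqref{eq:TQG} and \eqref{eq:reducedDeltaQ} and the definitions of the reduced path spaces, which is precisely what you spell out. Your pointwise application of the trivializations, the transfer of the regularity and boundary conditions, and the translation of the constraint via the assumed splitting $[\Delta_Q]_G\simeq\Delta_\Sigma\oplus\Delta_{\tilde{\mathfrak g}}$ supply the details the paper leaves implicit.
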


Note that the reduced path may be split into the \emph{horizontal reduced path},
\begin{equation*}
[\texttt{c}]_G^h\simeq\big((\alpha_T,\alpha_\Sigma),\nu_{\alpha_\Sigma},\pi_{\alpha_\Sigma},\big)\in\Omega(\Sigma,\tilde\tau)\times_{\boldsymbol\Sigma(\tilde\tau)}\big(\Delta_{\boldsymbol\Sigma(\tilde\tau)}\oplus T^\star\boldsymbol\Sigma(\tilde\tau)\big),
\end{equation*}
and the \emph{vertical reduced path},
\begin{equation*}
[\texttt{c}]_G^v\simeq\big((\alpha_T,\alpha_\Sigma),\xi_
{\alpha_\Sigma},\rho_{\alpha_\Sigma}\big)\in\Omega(\Sigma,\tilde\tau)\times_{\boldsymbol\Sigma(\tilde\tau)}\big(\Delta_{\tilde{\boldsymbol{\mathfrak g}}(\tilde\tau)}\oplus\tilde{\boldsymbol{\mathfrak g}}^\star(\tilde\tau)\big),
\end{equation*}

At last, given $\alpha_\Sigma\in\boldsymbol\Sigma(\tilde\tau)$, $\nu_{\alpha_\Sigma}\in T_{\alpha_\Sigma}\boldsymbol\Sigma(\tilde\tau)$ and $\pi_{\alpha_\Sigma}\in T_{\alpha_\Sigma}^\star\boldsymbol\Sigma(\tilde\tau)$, the spaces $T_{\nu_{\alpha_\Sigma}}(T\boldsymbol\Sigma(\tilde\tau))$ and $T_{\pi_{\alpha_\Sigma}}(T^*\boldsymbol\Sigma(\tilde\tau))$ are defined in the same vein as $T_{\nu_{\alpha_Q}}(T\mathcal Q(\tilde\tau))$ and $T_{\pi_{\alpha_Q}}(T^*\mathcal Q(\tilde\tau))$ by exchanging $Q$ by $\Sigma$. In addition, for each $\xi_{\alpha_\Sigma}\in\tilde{\boldsymbol{\mathfrak g}}(\tilde\tau)_{\alpha_\Sigma}$, we denote
\begin{multline*}
\qquad\qquad T_{\xi_{\alpha_\Sigma}}\tilde{\boldsymbol{\mathfrak g}}(\tilde\tau)=\Big\{\delta\xi_{\alpha_\Sigma}\in C^0([\tau_0,\tau_1],T\tilde{\mathfrak g})\mid\xi_{\alpha_\Sigma}=\pi_{T\tilde{\mathfrak g}}\circ\delta\xi_{\alpha_\Sigma},\\
\delta\xi_{\alpha_\Sigma}\text{ is piecewise }C^2\text{ and has only one singularity at }\tilde\tau\Big\},\qquad\qquad
\end{multline*}
where $\pi_{T\tilde{\mathfrak g}}:T\tilde{\mathfrak g}\to\tilde{\mathfrak g}$ is the natural projection. Analogously, we denote $T_{\rho_{\alpha_\Sigma}}\tilde{\boldsymbol{\mathfrak g}}^\star(\tilde\tau)$ for each $\rho_{\alpha_\Sigma}\in \tilde{\boldsymbol{\mathfrak g}}^\star(\tilde\tau)_{\alpha_\Sigma}$. In order to incorporate the constraint distribution, we set $\Delta_{T\boldsymbol\Sigma(\tilde\tau)}(\nu_{\alpha_\Sigma})$ as in \eqref{eq:DeltaTQ} by exchanging $Q$ by $\Sigma$.


\subsection{Nonholonomic implicit Lagrange--Poincaré equations with collisions}

Let $L:TQ\to\mathbb R$ be a (possibly degenerate) $G$-invariant Lagrangian, i.e.,
\begin{equation*}
L\left(v_q\right)=L\left((d\Phi_g)_h(v_q)\right),\qquad g\in G,~ q\in Q,~v_q\in T_q Q,
\end{equation*}
The \emph{reduced} or \emph{dropped Lagrangian} is defined as
\begin{equation*}
\ell:(TQ)/G\simeq T\Sigma\oplus\tilde{\mathfrak g}\to\mathbb R,\quad[v_q]_G\simeq(v_\sigma,\xi_\sigma)\mapsto\ell(v_\sigma,\xi_\sigma)=L(v_q),
\end{equation*}
where identification \eqref{eq:TQG} has been used. As the energy of the system \eqref{eq:energy} is also $G$-invariant, the \emph{reduced energy} may be defined as
\begin{equation*}
e:(T\Sigma\times\tilde{\mathfrak g})\oplus(T^*\Sigma\times\tilde{\mathfrak g}^*)\to\mathbb R,\quad((v_\sigma,\xi_\sigma),(p_\sigma,\rho_\sigma))\mapsto p_\sigma\cdot v_\sigma+\rho_\sigma\cdot\xi_\sigma-\ell(v_\sigma,\xi_\sigma),
\end{equation*}
where the identification $(TQ\oplus T^*Q)/G\simeq(T\Sigma\times\tilde{\mathfrak g})\oplus(T^*\Sigma\times\tilde{\mathfrak g}^*)$ has been used.

\subsubsection{Reduced variational principle}

Given a path $\tilde\alpha=(\alpha_T,\alpha_\Sigma)\in\Omega(\Sigma,\tilde\tau)$, its \emph{associated curve} is defined as
\begin{equation*}
\sigma_{\tilde\alpha}=\alpha_\Sigma\circ\alpha_T^{-1}:[t_0,t_1]\to\Sigma.
\end{equation*}
Additionally, for the paths $(\nu_{\alpha_\Sigma},\xi_{\alpha_\Sigma})\in T_{\alpha_\Sigma}\boldsymbol\Sigma(\tilde\tau)\oplus\tilde{\boldsymbol{\mathfrak g}}(\tilde\tau)_{\alpha_\Sigma}$ and $(\pi_{\alpha_\Sigma},\rho_{\alpha_\Sigma})\in T_{\alpha_\Sigma}^\star\boldsymbol\Sigma(\tilde\tau)\oplus\tilde{\boldsymbol{\mathfrak g}}^\star(\tilde\tau)_{\alpha_\Sigma}$, we set
\begin{align*}
& u_{\tilde\alpha}=\nu_{\alpha_\Sigma}\circ\alpha_T^{-1}:[t_0,t_1]\to T\Sigma,\qquad && y_{\tilde\alpha}=\pi_{\alpha_\Sigma}\circ\alpha_T^{-1}:[t_0,t_1]\to T^*\Sigma,\\
& \overline\xi_{\tilde\alpha}=\xi_{\alpha_\Sigma}\circ\alpha_T^{-1}:[t_0,t_1]\to\tilde{\mathfrak g},\qquad && \overline\rho_{\tilde\alpha}=\rho_{\alpha_\Sigma}\circ\alpha_T^{-1}:[t_0,t_1]\to\tilde{\mathfrak g}^*.
\end{align*}
Observe that $\pi_{T\Sigma}\circ u_{\tilde\alpha}=\pi_{\tilde{\mathfrak g}}\circ\overline\xi_{\tilde\alpha}=\pi_{T^*\Sigma}\circ y_{\tilde\alpha}=\pi_{\tilde{\mathfrak g}^*}\circ\overline\rho_{\tilde\alpha}=\sigma_{\tilde\alpha}$. 

\begin{remark}\label{remark:etaalpha}
As we will see below, given $\alpha=(\alpha_T,\alpha_Q)\in\Omega(Q,\tilde\tau)$ and the corresponding reduced path $\tilde\alpha=(\alpha_T,\alpha_\Sigma=[\alpha_Q]_G)\in\Omega(\Sigma,\tilde\tau)$, we set $\eta_{\alpha_\Sigma}=[\alpha_Q,\omega\circ\alpha_Q']_G$. The change of variable $t=\alpha_T(\tau)$ yields $\dot q_\alpha=\alpha_Q'/\alpha_T'$, which together with \eqref{eq:associatedcurve}, yield
\begin{equation*}
\overline\eta_{\tilde\alpha}=\eta_{\alpha_\Sigma}\circ\alpha_T^{-1}=(\alpha_T'\circ\alpha_T^{-1})\,[q_\alpha,\omega\circ\dot q_\alpha]_G.
\end{equation*}
\end{remark}

By using the reduced Lagrangian, we may define the \emph{reduced action functional},
\begin{equation*}
\mathfrak S:\Omega(\Sigma,\tilde\tau)\times_{\boldsymbol\Sigma(\tilde\tau)}\Big(\big(\Delta_{\boldsymbol\Sigma(\tilde\tau)}\oplus\Delta_{\tilde{\boldsymbol{\mathfrak g}}(\tilde\tau)}\big)\oplus\big(T^\star\boldsymbol\Sigma(\tilde\tau)\oplus\tilde{\boldsymbol{\mathfrak g}}^\star(\tilde\tau)\big)\Big)\times_{\boldsymbol\Sigma(\tilde\tau)}\tilde{\boldsymbol{\mathfrak g}}(\tilde\tau)\to\mathbb R,
\end{equation*}
as follows
\begin{align*}
& \mathfrak S\big((\alpha_T,\alpha_\Sigma),(\nu_{\alpha_\Sigma},\xi_
{\alpha_\Sigma}),(\pi_{\alpha_\Sigma},\rho_{\alpha_\Sigma}),\eta_{\alpha_\Sigma}\big)\\
& =\int_{t_0}^{t_1}\left(\ell\left(u_{\tilde\alpha}(t),\overline\xi_{\tilde\alpha}(t)\right)+y_{\tilde\alpha}(t)\cdot\left(\dot\sigma_{\tilde\alpha}(t)-u_{\tilde\alpha}(t)\right)+\overline\rho_{\tilde\alpha}(t)\cdot\left(\frac{\overline\eta_{\tilde\alpha}(t)}{(\alpha_T'\circ\alpha_T^{-1})(t)}-\overline\xi_{\tilde\alpha}(t)\right)\right)dt\\
& =\int_{\tau_0}^{\tau_1}\bigg(\ell\left(\nu_{\alpha_\Sigma}(\tau),\xi_{\alpha_\Sigma}(\tau)\right)+\pi_{\alpha_\Sigma}(\tau)\cdot\left(\frac{\alpha_\Sigma'(\tau)}{\alpha_T'(\tau)}-\nu_{\alpha_\Sigma}(\tau)\right)\\
& \hspace{15mm}+\rho_{\alpha_\Sigma(\tau)}\cdot\left(\frac{\eta_{\alpha_\Sigma}(\tau)}{\alpha_T'(\tau)}-\xi_{\alpha_\Sigma}(\tau)\right)\bigg)\,\alpha_T'(\tau)\,d\tau.
\end{align*}
As for the unreduced case, the equality for the first and the second expressions comes from the change of variables $t=\alpha_T(\tau)$. Furthermore, the reduced action functional may be written in terms of the reduced energy,
\begin{align*}
& \mathfrak S\big((\alpha_T,\alpha_\Sigma),(\nu_{\alpha_\Sigma},\xi_
{\alpha_\Sigma}),(\pi_{\alpha_\Sigma},\rho_{\alpha_\Sigma}),\eta_{\alpha_\Sigma}\big)\\
& =\int_{t_0}^{t_1}\left(y_{\tilde\alpha}(t)\cdot\dot\sigma_{\tilde\alpha}(t)+\overline\rho_{\tilde\alpha}(t)\cdot\frac{\overline\eta_{\tilde\alpha}(t)}{(\alpha_T'\circ\alpha_T^{-1})(t)}-e\left(\left(u_{\tilde\alpha}(t),\overline\xi_{\tilde\alpha}(t)\right),\left(y_{\tilde\alpha}(t),\overline\rho_{\tilde\alpha}(t)\right)\right)\right)dt\\
& =\int_{\tau_0}^{\tau_1}\bigg(\pi_{\alpha_\Sigma}(\tau)\cdot\frac{\alpha_\Sigma'(\tau)}{\alpha_T'(\tau)}+\rho_{\alpha_\Sigma(\tau)}\cdot\frac{\eta_{\alpha_\Sigma}(\tau)}{\alpha_T'(\tau)}-e\left(\left(\nu_{\alpha_\Sigma}(\tau),\xi_{\alpha_\Sigma}(\tau)\right),\left(\pi_{\alpha_\Sigma}(\tau),\rho_{\alpha_\Sigma}(\tau)\right)\right)\bigg)\,\alpha_T'(\tau)\,d\tau.
\end{align*}

In order to reduce the variational principle and obtain the reduced equations we need to introduce linear connections on the reduced spaces. More specifically, recall that $\omega$ induces a linear connection on the adjoint bundle, $\nabla^\omega:\Gamma(\tilde{\mathfrak g})\to\Omega^1(\Sigma,\tilde{\mathfrak g})$. The dual of $\nabla^\omega$, which is a linear connection on the coadjoint bundle, is denoted by $\nabla^{\omega*}:\Gamma(\tilde{\mathfrak g}^*)\to\Omega^1(\Sigma,\tilde{\mathfrak g}^*)$. We also consider a linear connection on the tangent bundle of $\Sigma$, $\nabla^\Sigma:\mathfrak X(\Sigma)\to\Omega^1(\Sigma,T\Sigma)$, as well as its dual, which is a linear connection on the cotangent bundle of $\Sigma$, $\nabla^{\Sigma*}:\Omega^1(\Sigma)\to\Omega^1(\Sigma,T^*\Sigma)$. These connections allow for computing the vertical parts of the variations. For instance, given $\alpha_\Sigma\in\boldsymbol\Sigma(\tilde\tau)$, $\xi_{\alpha_\Sigma}\in\tilde{\boldsymbol{\mathfrak g}}(\tilde\tau)_{\alpha_\Sigma}$ and $\delta\xi_{\alpha_\Sigma}\in T_{\xi_{\alpha_\Sigma}}\tilde{\boldsymbol{\mathfrak g}}(\tilde\tau)$, we denote 
\begin{equation*}
\delta^\omega\xi_{\alpha_\Sigma}=\left(\delta\xi_{\alpha_\Sigma}\right)^v\in\tilde{\boldsymbol{\mathfrak g}}(\tilde\tau)_{\alpha_\Sigma},
\end{equation*}
where the superscript $v$ denotes the vertical part computed with $\nabla^\omega$, and analogous for the other connections: $\delta^\Sigma$, $\delta^{\omega*}$ and $\delta^{\Sigma*}$. The covariant derivatives induced by these connection are denoted by $\nabla^\omega/d\tau$, $\nabla^\Sigma/d\tau$, $\nabla^{\omega*}/d\tau$ and $\nabla^{\Sigma*}/d\tau$, respectively. 

In the following, the (fiberwise) adjoint representation of $\tilde{\mathfrak g}$ is denoted by $\ad_{\xi_\sigma}:\tilde{\mathfrak g}_\sigma\to\tilde{\mathfrak g}_\sigma$ for each $\xi_\sigma\in\tilde{\mathfrak g}_\sigma$, $\sigma\in\Sigma$. Likewise, the (fiberwise) coadjoint representation, i.e., minus the adjoint of the adjoint representation, is denoted by $\ad_{\xi_\sigma}^*:\tilde{\mathfrak g}_\sigma^*\to\tilde{\mathfrak g}_\sigma^*$. In addition, the reduced curvature of the principal connection $\omega\in\Omega^1(Q,\mathfrak g)$ is denoted by $\tilde F^\omega\in\Omega^2(\Sigma,\tilde{\mathfrak g})$. 

\begin{theorem}[Reduced Hamilton--d'Alembert--Pontryagin principle]\label{theorem:reducedHdAPprinciple}
Let
\begin{equation*}
\texttt{c}=\big((\alpha_T,\alpha_Q),\nu_{\alpha_Q},\pi_{\alpha_Q}\big)\in\Omega(Q,\tilde\tau)\times_{\mathcal Q(\tilde\tau)}\left(\Delta_{\mathcal Q(\tilde\tau)}\oplus T^\star\mathcal Q(\tilde\tau)\right)    
\end{equation*}
be a path,
\begin{equation*}
[\texttt{c}]_G\simeq\big((\alpha_T,\alpha_\Sigma),(\nu_{\alpha_\Sigma},\xi_
{\alpha_\Sigma}),(\pi_{\alpha_\Sigma},\rho_{\alpha_\Sigma})\big)\in\Omega(\Sigma,\tilde\tau)\times_{\boldsymbol\Sigma(\tilde\tau)}\Big(\big(\Delta_{\boldsymbol\Sigma(\tilde\tau)}\oplus\Delta_{\tilde{\boldsymbol{\mathfrak g}}(\tilde\tau)}\big)\oplus\big(T^\star\boldsymbol\Sigma(\tilde\tau)\oplus\tilde{\boldsymbol{\mathfrak g}}^\star(\tilde\tau)\big)\Big),  
\end{equation*}
be the corresponding reduced path, as in Lemma \ref{lemma:reducedpath}, and
\begin{equation*}
\eta_{\alpha_\Sigma}=[\alpha_Q,\omega\circ\alpha_Q']_G\in\tilde{\boldsymbol{\mathfrak g}}(\tilde\tau)_{\alpha_\Sigma}.
\end{equation*}
Then $\texttt{c}$ is critical for the action functional $\mathbb S$ (recall Definition \ref{def:HdAPprinciple}) if and only if
\begin{equation*}
d\mathfrak S\big([\texttt{c}]_G,\eta_{\alpha_\Sigma}\big)\big(\delta[\texttt{c}]_G,\delta\eta_{\alpha_\Sigma}\big)=0,
\end{equation*}
for every variation
\begin{multline*}
\delta[\texttt{c}]_G=\big((\delta\alpha_T,\delta\alpha_\Sigma),(\delta\nu_{\alpha_\Sigma},\delta\xi_{\alpha_\Sigma}),(\delta\pi_{\alpha_\Sigma},\delta\rho_{\alpha_\Sigma})\big)\\
\in T_{(\alpha_T,\alpha_\Sigma)}\Omega(\Sigma,\tilde\tau)\times\left(\Delta_{T\boldsymbol\Sigma(\tilde\tau)}(\nu_{\alpha_\Sigma})\oplus T_{\xi_{\alpha_\Sigma}}\tilde{\boldsymbol{\mathfrak g}}(\tilde\tau)\right)\times\left(T_{\pi_{\alpha_\Sigma}}(T^\star\boldsymbol\Sigma(\tilde\tau))\oplus T_{\rho_{\alpha_\Sigma}}\tilde{\boldsymbol{\mathfrak g}}^\star(\tilde\tau)\right),
\end{multline*}
such that $\delta\alpha_T(\tau_0)=\delta\alpha_T(\tau_1)=0$, $\delta\alpha_\Sigma(\tau_0)=\delta\alpha_\Sigma(\tau_1)=0$ and
\begin{equation}\label{eq:projectionvariationreduced}
d\pi_{T\Sigma}\circ\delta\nu_{\alpha_\Sigma}= d\pi_{\tilde{\mathfrak g}}\circ\delta\xi_{\alpha_\Sigma}=d\pi_{T^*\Sigma}\circ\delta\pi_{\alpha_\Sigma}=d\pi_{\tilde{\mathfrak g}^*}\circ\rho_{\alpha_\Sigma}=\delta\alpha_\Sigma,
\end{equation}
and for every variation $\delta\eta_{\alpha_\Sigma}\in T_{\eta_{\alpha_\Sigma}}\tilde{\boldsymbol{\mathfrak g}}(\tilde\tau)$ such that
\begin{equation}\label{eq:deltaeta}
\delta^\omega\eta_{\alpha_\Sigma}=\frac{\nabla^\omega\hat\eta_{\alpha_\Sigma}}{d\tau}+\ad_{\eta_{\alpha_\Sigma}}(\hat\eta_{\alpha_\Sigma})+\tilde F^\omega(\delta\alpha_\Sigma,\alpha_\Sigma')\in\tilde{\boldsymbol{\mathfrak g}}(\tilde\tau)_{\alpha_\Sigma},
\end{equation}
where $\nabla^\omega/d\tau$ denotes the covariant derivative induced by $\nabla^\omega$, for arbitrary paths $\hat\eta_{\alpha_\Sigma}\in\tilde{\boldsymbol{\mathfrak g}}(\tilde\tau)_{\alpha_\Sigma}$ such that $\hat\eta_{\alpha_\Sigma}(\tau_0)=\hat\eta_{\alpha_\Sigma}(\tau_1)=0$.
\end{theorem}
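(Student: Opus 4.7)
The plan is to prove the theorem by showing that (i) the value of the unreduced action at $\texttt{c}$ coincides with the value of the reduced action at $([\texttt{c}]_G, \eta_{\alpha_\Sigma})$, and (ii) the differentials match when paired against corresponding variations, so that vanishing of $d\mathbb S$ on arbitrary unreduced variations is equivalent to vanishing of $d\mathfrak S$ on reduced variations satisfying \eqref{eq:projectionvariationreduced} and \eqref{eq:deltaeta}.

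\textbf{Step 1 (identification of the actions).} Using the principal connection $\omega$ and the trivializations \eqref{eq:trivialTQ}, write $\nu_{\alpha_Q}\simeq(\nu_{\alpha_\Sigma},\xi_{\alpha_\Sigma})$ and $\pi_{\alpha_Q}\simeq(\pi_{\alpha_\Sigma},\rho_{\alpha_\Sigma})$. Since $\omega(\alpha_Q')$ represents $\eta_{\alpha_\Sigma}$ in the adjoint bundle and the horizontal part of $\alpha_Q'$ projects onto $\alpha_\Sigma'$, the pairing splits as $\pi_{\alpha_Q}\cdot\alpha_Q'=\pi_{\alpha_\Sigma}\cdot\alpha_\Sigma'+\rho_{\alpha_\Sigma}\cdot\eta_{\alpha_\Sigma}$ and similarly $\pi_{\alpha_Q}\cdot\nu_{\alpha_Q}=\pi_{\alpha_\Sigma}\cdot\nu_{\alpha_\Sigma}+\rho_{\alpha_\Sigma}\cdot\xi_{\alpha_\Sigma}$. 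By $G$-invariance, $L(\nu_{\alpha_Q})=\ell(\nu_{\alpha_\Sigma},\xi_{\alpha_\Sigma})$. Substituting into the defining integral for $\mathbb S$ yields exactly the integrand of $\mathfrak S$, so $\mathbb S(\texttt{c})=\mathfrak S([\texttt{c}]_G,\eta_{\alpha_\Sigma})$.

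\textbf{Step 2 (correspondence of variations).} A variation $\delta\texttt{c}$ of $\texttt{c}$ descends, via the quotient projection, to a variation $\delta[\texttt{c}]_G$ of $[\texttt{c}]_G$; the $G$-equivariance of $\Delta_Q$ together with \eqref{eq:reducedDeltaQ} ensures the constraint restrictions match, and the projection condition on $\delta\nu_{\alpha_Q}$ and $\delta\pi_{\alpha_Q}$ becomes \eqref{eq:projectionvariationreduced}. The extra datum $\eta_{\alpha_\Sigma}=[\alpha_Q,\omega\circ\alpha_Q']_G$ depends on $\alpha_Q$, so a variation $\delta\alpha_Q$ produces a variation $\delta\eta_{\alpha_\Sigma}$. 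Setting $\hat\eta_{\alpha_\Sigma}=[\alpha_Q,\omega\circ\delta\alpha_Q]_G\in\tilde{\boldsymbol{\mathfrak g}}(\tilde\tau)_{\alpha_\Sigma}$, a two-parameter family $\alpha_Q(\tau,s)$ satisfies the Maurer--Cartan-type structure identity
\begin{equation*}
\partial_\tau\omega(\partial_s\alpha_Q)-\partial_s\omega(\partial_\tau\alpha_Q)=-[\omega(\partial_\tau\alpha_Q),\omega(\partial_s\alpha_Q)]+F^\omega(\partial_\tau\alpha_Q,\partial_s\alpha_Q),
\end{equation*}
which, upon projecting to the adjoint bundle through $\nabla^\omega$, becomes precisely \eqref{eq:deltaeta}. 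Conversely, given a reduced variation $\delta[\texttt{c}]_G$ and an arbitrary $\hat\eta_{\alpha_\Sigma}\in\tilde{\boldsymbol{\mathfrak g}}(\tilde\tau)_{\alpha_\Sigma}$ vanishing at the endpoints, one lifts to an unreduced variation $\delta\alpha_Q$ by specifying its horizontal part as $\texttt H_{\alpha_Q}\delta\alpha_\Sigma$ and its vertical part by the representative of $\hat\eta_{\alpha_\Sigma}$; the vanishing conditions at $\tau_0,\tau_1$ carry over directly. Thus the induced map on variation spaces is surjective onto those $\delta[\texttt{c}]_G$ and $\delta\eta_{\alpha_\Sigma}$ satisfying \eqref{eq:projectionvariationreduced} and \eqref{eq:deltaeta}.

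\textbf{Step 3 (conclusion).} Combining Steps 1 and 2, for every such admissible pair of variations one has the chain-rule identity
\begin{equation*}
d\mathbb S(\texttt{c})(\delta\texttt{c})=d\mathfrak S\big([\texttt{c}]_G,\eta_{\alpha_\Sigma}\big)\big(\delta[\texttt{c}]_G,\delta\eta_{\alpha_\Sigma}\big).
\end{equation*}
Since every unreduced variation with endpoints fixed descends to a reduced variation of the prescribed form, and conversely every reduced admissible variation lifts, the vanishing of the left-hand side for all $\delta\texttt{c}$ is equivalent to the vanishing of the right-hand side for all admissible $(\delta[\texttt{c}]_G,\delta\eta_{\alpha_\Sigma})$, proving the claim.

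\textbf{Main obstacle.} The crucial technical point is the derivation of the constrained variation formula \eqref{eq:deltaeta}, which requires carefully commuting $\tau$ and $s$ derivatives applied to $\omega$ along a two-parameter family in $Q$, using the Maurer--Cartan-type structure equation for $\omega$, and then transferring the identity from $\mathfrak g$-valued forms on $Q$ to $\tilde{\mathfrak g}$-valued objects on $\Sigma$ via $\nabla^\omega$ (which is where the reduced curvature $\tilde F^\omega$ appears with the sign shown). Secondarily, one must verify that the presence of the jump at $\tilde\tau$ does not introduce spurious terms: since both actions are defined via integrals on $[\tau_0,\tau_1]$ with the same singular behavior at $\tilde\tau$, and the bijection between variations preserves the singular structure, the equivalence is unaffected by the collision.
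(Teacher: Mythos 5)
Your proposal is correct and follows essentially the same route as the paper: equality of the unreduced and reduced actions, followed by a bijective correspondence between admissible unreduced variations and reduced variations subject to \eqref{eq:projectionvariationreduced} and \eqref{eq:deltaeta}. The only difference is that where the paper invokes \cite[Theorem 3.3.1]{CeMaRa2001} (applied separately on $[\tau_0,\tilde\tau)$ and $(\tilde\tau,\tau_1]$) to obtain the constrained variation formula \eqref{eq:deltaeta}, you sketch its direct derivation from the structure equation of the principal connection along a two-parameter family, which is precisely the content of that cited result.
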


\begin{proof}
To begin with, note that $\mathbb S(\texttt{c})=\mathfrak S([\texttt{c}]_G,\eta_{\alpha_\Sigma})$ by construction. Now let
\begin{equation*}
\left\{\texttt{c}^s=\left((\alpha_T^s,\alpha_Q^s),\nu_{\alpha_Q}^s,\pi_{\alpha_Q}^s\right)\in\Omega(Q,\tilde\tau)\times_{\mathcal Q(\tilde\tau)}\left(\Delta_{\mathcal Q(\tilde\tau)}\oplus T^\star\mathcal Q(\tilde\tau)\right)|s\in(-\epsilon,\epsilon)\right\}
\end{equation*}
be a variation of $\texttt{c}$ such that $\delta\texttt{c}=d/ds|_{s=0}\texttt{c}^s=\left((\delta\alpha_T,\delta\alpha_Q),\delta\nu_{\alpha_Q},\delta\pi_{\alpha_Q}\right)$. Then the reduced variation is computed from \eqref{eq:TQG}, leading to
\begin{align*}
\delta[\texttt{c}]_G & =\left.\frac{d}{ds}\right|_{s=0}[\texttt{c}^s]_G\\
& =\left.\frac{d}{ds}\right|_{s=0}\big((\alpha_T^s,\alpha_\Sigma^s),(\nu_{\alpha_\Sigma}^s,\xi_
{\alpha_\Sigma}^s),(\pi_{\alpha_\Sigma}^s,\rho_{\alpha_\Sigma}^s)\big)\\
& =\left.\frac{d}{ds}\right|_{s=0}\big((\alpha_T^s,\pi_{Q,\Sigma}\circ\alpha_Q^s),(d\pi_{Q,\Sigma}\circ\nu_{\alpha_Q}^s,[\alpha_Q^s,\omega\circ\nu_{\alpha_Q}^s]_G),(\texttt H^*\circ\pi_{\alpha_Q}^s,[\alpha_Q^s,\mathbf J\circ\pi_{\alpha_Q}^s]_G)\big)\\
& =\big((\delta\alpha_T,\delta\alpha_\Sigma),(\delta\nu_{\alpha_\Sigma},\delta\xi_{\alpha_\Sigma}),(\delta\pi_{\alpha_\Sigma},\delta\rho_{\alpha_\Sigma})\big),
\end{align*}
where we have denoted $\delta\alpha_\Sigma=d\pi_{Q,\Sigma}\circ\delta\alpha_Q$, $\delta\nu_{\alpha_\Sigma}=d(d\pi_{Q,\Sigma})\circ\delta\nu_{\alpha_Q}$, $\delta\pi_{\alpha_\Sigma}=d\texttt H^*\circ\delta\pi_{\alpha_Q}$ and
\begin{equation*}
\delta\xi_{\alpha_\Sigma}=\left.\frac{d}{ds}\right|_{s=0}[\alpha_Q^s,\omega\circ\nu_{\alpha_Q}^s]_G,\qquad\delta\rho_{\alpha_\Sigma}=\left.\frac{d}{ds}\right|_{s=0}[\alpha_Q^s,\mathbf J\circ\pi_{\alpha_Q}^s]_G.
\end{equation*}
It is now clear that free variations $\delta\alpha_Q$, $\delta\nu_{\alpha_Q}$ and $\delta\pi_{\alpha_Q}$ induce free variations $\delta\alpha_\Sigma$, $\delta\xi_{\alpha_\Sigma}$, $\delta\pi_{\alpha_\Sigma}$ and $\delta\rho_{\alpha_\Sigma}$. Note that, although we have the restriction $\delta\nu_{\alpha_Q}\in\Delta_{T\mathcal Q}(\nu_{\alpha_Q})$, the condition \eqref{eq:DeltaTQ} does not restrict $\delta\xi_{\alpha_\Sigma}$. This may be checked locally, where $\delta\nu_{\alpha_Q}=(\alpha,\nu,\delta\alpha,\delta\nu)$ and the restriction reads $\delta\alpha\in\Delta_{\mathcal Q(\tilde\tau)}(\alpha_Q)$, but $\delta\xi_{\alpha_Q}$ only depends on $\delta\nu$, which is free.

Next, we have
\begin{equation*}
\left\{\begin{array}{rl}
\displaystyle d\pi_{T\Sigma}\circ\delta\nu_{\alpha_\Sigma} & \displaystyle=\left.\frac{d}{ds}\right|_{s=0}\pi_{T\Sigma}\circ d\pi_{Q,\Sigma}\circ\nu_{\alpha_Q}^s=d\pi_{Q,\Sigma}\circ d\pi_{TQ}\circ\delta\nu_{\alpha_Q},\vspace{2mm}\\
\displaystyle d\pi_{\tilde{\mathfrak g}}\circ\delta\xi_{\alpha_\Sigma} & \displaystyle=\left.\frac{d}{ds}\right|_{s=0}\pi_{\tilde{\mathfrak g}}\circ[\alpha_Q^s,\omega\circ\nu_{\alpha_Q}^s]_G=d\pi_{Q,\Sigma}\circ\delta\alpha_Q=\delta\alpha_\Sigma,\vspace{2mm}\\
\displaystyle d\pi_{T^*\Sigma}\circ\delta\pi_{\alpha_\Sigma} & \displaystyle=\left.\frac{d}{ds}\right|_{s=0}\pi_{T^*\Sigma}\circ\texttt{H}^*\circ\pi_{\alpha_Q}^s=d\pi_{Q,\Sigma}\circ d\pi_{T^*Q}\circ\delta\pi_{\alpha_Q},\vspace{2mm}\\
\displaystyle d\pi_{\tilde{\mathfrak g}^*}\circ\rho_{\alpha_\Sigma} & \displaystyle=\left.\frac{d}{ds}\right|_{s=0}\pi_{\tilde{\mathfrak g}^*}\circ[\alpha_Q^s,\mathbf J\circ\nu_{\alpha_Q}^s]_G=d\pi_{Q,\Sigma}\circ\delta\alpha_Q=\delta\alpha_\Sigma,
\end{array}\right.
\end{equation*}
where we have used that $\pi_{T\Sigma}\circ d\pi_{Q,\Sigma}=\pi_{Q,\Sigma}\circ\pi_{TQ}$ and $\pi_{T^*\Sigma}\circ\texttt{H}^*=\pi_{Q,\Sigma}\circ\pi_{T^*Q}$. From the first equality, we conclude that the condition $\delta\nu_{\alpha_Q}\in\Delta_{T\mathcal Q(\tilde\tau)}(\nu_{\alpha_Q})$ yields $\delta\nu_{\alpha_\Sigma}\in\Delta_{T\boldsymbol\Sigma(\tilde\tau)}(\nu_{\alpha_\Sigma})$, where we have used \eqref{eq:TQG} and \eqref{eq:reducedDeltaQ}. Moreover, a straightforward check shows that the condition \eqref{eq:projectionvariation} reduces to \eqref{eq:projectionvariationreduced}.

The second part is a straightforward consequence of \cite[Theorem 3.3.1]{CeMaRa2001}, which can be applied independently on $[\tau_0,\tilde\tau)$ and $(\tilde\tau,\tau_1]$. Recall that in our case the parameter of the paths is $\tau$ instead of $t$.
\end{proof}

\subsubsection{Reduced equations}

The fiber derivatives of the reduced Lagrangian are denoted by
\begin{equation*}
\begin{array}{ll}
\displaystyle\frac{\delta\ell}{\delta v_\sigma}:T\Sigma\oplus\tilde{\mathfrak g}\to T^*\Sigma,\qquad & \displaystyle\frac{\delta\ell}{\delta v_\sigma}(v_\sigma,\xi_\sigma)\cdot w_\sigma=\left.\frac{d}{ds}\right|_{s=0}\ell(v_\sigma+s\,w_\sigma,\xi_\sigma),\vspace{2mm}\\
\displaystyle\frac{\delta\ell}{\delta\xi_\sigma}:T\Sigma\oplus\tilde{\mathfrak g}\to\tilde{\mathfrak g}^*,\qquad & \displaystyle\frac{\delta\ell}{\delta\xi_\sigma}(v_\sigma,\xi_\sigma)\cdot\eta_\sigma=\left.\frac{d}{ds}\right|_{s=0}\ell(v_\sigma,\xi_\sigma+s\,\eta_\sigma),
\end{array}
\end{equation*}
for each $(v_\sigma,\xi_\sigma),(w_\sigma,\eta_\sigma)\in T_\sigma\Sigma\oplus\tilde{\mathfrak g}_\sigma$, $\sigma\in\Sigma$. Moreover, the horizontal derivative of the reduced Lagrangian is defined by using the linear connection $\nabla^\Sigma\oplus\nabla^\omega$ on $T\Sigma\oplus\tilde{\mathfrak g}$ as
\begin{equation*}
\frac{\delta\ell}{\delta\sigma}:T\Sigma\oplus\tilde{\mathfrak g}\to T^*\Sigma,\qquad\frac{\delta\ell}{\delta\sigma}(v_\sigma,\xi_\sigma)\cdot w_\sigma=\left.\frac{d}{ds}\right|_{s=0}\left(\ell\circ\gamma_{(v_\sigma,\xi_\sigma)}^h\right)(s),
\end{equation*}
for each $(v_\sigma,\xi_\sigma)\in T_\sigma\Sigma\oplus\tilde{\mathfrak g}_\sigma$ and $w_\sigma\in T_\sigma\Sigma$, where $\gamma:(-\epsilon,\epsilon)\to\Sigma$ is a curve such that $d/ds|_{s=0}\gamma(s)=w_\sigma$ and $\gamma_{(v_\sigma,\xi_\sigma)}^h:(-\epsilon,\epsilon)\to T\Sigma\oplus\tilde{\mathfrak g}$ is its horizontal lift at $(v_\sigma,\xi_\sigma)$ given by the linear connection $\nabla^\Sigma\oplus\nabla^\omega$.

\begin{theorem}[Reduced nonholonomic implicit Euler--Lagrange equations with collisions]
Let $\texttt{c}=((\alpha_T,\alpha_Q),\nu_{\alpha_Q},\pi_{\alpha_Q})\in\Omega(Q,\tilde\tau)\times_{\mathcal Q(\tilde\tau)}\left(\Delta_{\mathcal Q(\tilde\tau)}\oplus T^\star\mathcal Q(\tilde\tau)\right)$ be a path and
\begin{equation*}
[\texttt{c}]_G=\big((\alpha_T,\alpha_\Sigma),(\nu_{\alpha_\Sigma},\xi_
{\alpha_\Sigma}),(\pi_{\alpha_\Sigma},\rho_{\alpha_\Sigma})\big)\in\Omega(\Sigma,\tilde\tau)\times_{\boldsymbol\Sigma(\tilde\tau)}\Big(\big(\Delta_{\boldsymbol\Sigma(\tilde\tau)}\oplus\Delta_{\tilde{\boldsymbol{\mathfrak g}}(\tilde\tau)}\big)\oplus\big(T^\star\boldsymbol\Sigma(\tilde\tau)\oplus\tilde{\boldsymbol{\mathfrak g}}^\star(\tilde\tau)\big)\Big)
\end{equation*}
be the reduced path (recall Lemma \ref{lemma:reducedpath}). Then the following statements are equivalent
\begin{enumerate}
    \item $\texttt{c}$ is stationary for the Hamilton--d'Alembert--Pontryagin variational principle introduced in Definition \ref{def:HdAPprinciple}.
    \item $\texttt{c}$ satisfies the nonholonomic implicit Euler--Lagrange equations with collisions given in Theorem \ref{theorem:implicitELequations}.
    \item $[\texttt{c}]_G$ and $\eta_{\alpha_\Sigma}=[\alpha_Q,\omega\circ\alpha_Q']_G$ satisfy the reduced Hamilton--d'Alembert--Pontryagin variational principle given in Theorem \ref{theorem:reducedHdAPprinciple}.
    \item $[\texttt{c}]_G$ and $\eta_{\alpha_\Sigma}=[\alpha_Q,\omega\circ\alpha_Q']_G$ satisfy the \emph{nonholonomic implicit Lagrange--Poincaré equations with collisions}, which consists of the \emph{horizontal equations} on  $[\tau_0,\tilde\tau)\cup(\tilde\tau,\tau_1]$,
    \begin{equation*}
    \left\{\begin{array}{l}
    \displaystyle\frac{\nabla^{\Sigma*}\pi_{\alpha_\Sigma}}{d\tau}-\frac{\delta\ell}{\delta\sigma}(\nu_{\alpha_\Sigma},\xi_{\alpha_\Sigma})\alpha_T'+\rho_{\alpha_\Sigma}\cdot\left(i_{\alpha_\Sigma'}\tilde F^\omega\right)\in\Delta_{\Sigma}^\circ(\alpha_\Sigma),\vspace{2mm}\\
    \displaystyle\frac{\alpha_\Sigma'}{\alpha_T'}=\nu_{\alpha_\Sigma}\in\Delta_\Sigma(\alpha_\Sigma),\qquad \frac{\delta\ell}{\delta v_\sigma}(\nu_{\alpha_\Sigma},\xi_{\alpha_\Sigma})=\pi_{\alpha_\Sigma},
    \end{array}\right.
    \end{equation*}
    where $i_U:\Omega^2(\Sigma,\tilde{\mathfrak g})\to\Omega^1(\Sigma,\tilde{\mathfrak g})$ denotes the left interior product by $U\in\mathfrak X(\Sigma)$, the \emph{vertical equations} on $[\tau_0,\tilde\tau)\cup(\tilde\tau,\tau_1]$,
    \begin{equation*}
    \left\{\begin{array}{l}
    \displaystyle\frac{\nabla^{\omega*}\rho_{\alpha_\Sigma}}{d\tau}=\ad_{\eta_{\alpha_\Sigma}}^*(\rho_{\alpha_\Sigma}),\vspace{2mm}\\
    \displaystyle\frac{\eta_{\alpha_\Sigma}}{\alpha_T'}=\xi_{\alpha_\Sigma}\in\Delta_{\tilde{\mathfrak g}}(\alpha_\Sigma),\qquad\frac{\delta\ell}{\delta\xi_\sigma}(\nu_{\alpha_\Sigma},\xi_{\alpha_\Sigma})=\rho_{\alpha_\Sigma},
    \end{array}\right.
    \end{equation*}
    and the \emph{reduced energy conservation} on $[\tau_0,\tilde\tau)\cup(\tilde\tau,\tau_1]$, 
    \begin{equation*}
    e'((\nu_{\alpha_\Sigma},\xi_{\alpha_\Sigma}),(\pi_{\alpha_\Sigma},\rho_{\alpha_\Sigma}))=0,
    \end{equation*}
    together with the \emph{reduced conditions for the elastic impact},
    \begin{equation*}
    \left\{\begin{array}{l}
    \displaystyle \pi_{\alpha_\Sigma}^+-\pi_{\alpha_\Sigma}^-\in\left(T_{\alpha_\Sigma(\tilde\tau)}\partial\Sigma\cap\Delta_\Sigma(\alpha_\Sigma(\tilde\tau)\right)^\circ=\left(T_{\alpha_\Sigma(\tilde\tau)}\partial\Sigma\right)^\circ+\Delta_\Sigma^\circ(\alpha_\Sigma(\tilde\tau)), \vspace{2mm}\\
    \displaystyle e\big((\nu_{\alpha_\Sigma}^+,\xi_{\alpha_\Sigma}^+),(\pi_{\alpha_\Sigma}^+,\rho_{\alpha_\Sigma}^+)\big)=e\big((\nu_{\alpha_\Sigma}^-,\xi_{\alpha_\Sigma}^-),(\pi_{\alpha_\Sigma}^-,\rho_{\alpha_\Sigma}^-)\big),
    \end{array}\right.
    \end{equation*}
    where we have denoted $\pi_{\alpha_\Sigma}(\tilde\tau^+)=\pi_{\alpha_\Sigma}^+$, etc.
\end{enumerate}
\end{theorem}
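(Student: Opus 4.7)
My plan is to establish the theorem as a chain of equivalences $(1)\Leftrightarrow(2)\Leftrightarrow(3)\Leftrightarrow(4)$, of which the first two links are already available from the paper: $(1)\Leftrightarrow(2)$ is exactly Theorem \ref{theorem:implicitELequations}, and $(1)\Leftrightarrow(3)$ is exactly Theorem \ref{theorem:reducedHdAPprinciple}. Consequently, the only content left to prove is the equivalence $(3)\Leftrightarrow(4)$ between the reduced variational principle and the reduced equations, and the entire proof reduces to a careful computation of $d\mathfrak{S}$.

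First I would differentiate the second (energy-based) form of the reduced action functional with respect to an admissible variation $(\delta[\texttt{c}]_G,\delta\eta_{\alpha_\Sigma})$, using the connections $\nabla^{\Sigma}$, $\nabla^{\omega}$ and their duals to split each variation into its horizontal and vertical parts. This produces three types of contributions: (a) the derivatives of $\ell$ contracted against $\delta\alpha_\Sigma$, $\delta^\Sigma\nu_{\alpha_\Sigma}$, $\delta^\omega\xi_{\alpha_\Sigma}$, together with the Pontryagin terms $\delta^{\Sigma*}\pi_{\alpha_\Sigma}\cdot(\alpha_\Sigma'/\alpha_T'-\nu_{\alpha_\Sigma})$ and $\delta^{\omega*}\rho_{\alpha_\Sigma}\cdot(\eta_{\alpha_\Sigma}/\alpha_T'-\xi_{\alpha_\Sigma})$; (b) the terms coming from varying $\alpha_\Sigma'$ and $\eta_{\alpha_\Sigma}$, which, after using $\delta^\Sigma\alpha_\Sigma'=\nabla^\Sigma\delta\alpha_\Sigma/d\tau$ and the structural identity \eqref{eq:deltaeta}, become $\pi_{\alpha_\Sigma}\cdot(\nabla^\Sigma\delta\alpha_\Sigma/d\tau)$ and $\rho_{\alpha_\Sigma}\cdot\bigl(\nabla^\omega\hat\eta_{\alpha_\Sigma}/d\tau+\ad_{\eta_{\alpha_\Sigma}}\hat\eta_{\alpha_\Sigma}+\tilde F^\omega(\delta\alpha_\Sigma,\alpha_\Sigma')\bigr)$; and (c) the terms proportional to $\delta\alpha_T'$, which reproduce the reduced energy.

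Next I would split the integral over $[\tau_0,\tau_1]\setminus\{\tilde\tau\}=[\tau_0,\tilde\tau)\cup(\tilde\tau,\tau_1]$ and integrate by parts on each subinterval, using the Leibniz rules for $\nabla^{\Sigma*}/d\tau$ and $\nabla^{\omega*}/d\tau$ and the duality $\rho\cdot\ad_\eta\hat\eta=\ad_\eta^*\rho\cdot\hat\eta$. Collecting coefficients of the independent free variations and using the boundary conditions $\delta\alpha_T(\tau_i)=\delta\alpha_\Sigma(\tau_i)=\hat\eta_{\alpha_\Sigma}(\tau_i)=0$ for $i=0,1$, the interior terms yield: from $\delta^{\Sigma*}\pi$ and $\delta^{\omega*}\rho$ the reconstruction equations $\nu_{\alpha_\Sigma}=\alpha_\Sigma'/\alpha_T'\in\Delta_\Sigma$ and $\xi_{\alpha_\Sigma}=\eta_{\alpha_\Sigma}/\alpha_T'\in\Delta_{\tilde{\mathfrak g}}$; from $\delta^\Sigma\nu$ and $\delta^\omega\xi$ the Legendre relations $\pi_{\alpha_\Sigma}=\delta\ell/\delta v_\sigma$ and $\rho_{\alpha_\Sigma}=\delta\ell/\delta\xi_\sigma$; from free $\hat\eta_{\alpha_\Sigma}$ the vertical equation $\nabla^{\omega*}\rho_{\alpha_\Sigma}/d\tau=\ad_{\eta_{\alpha_\Sigma}}^*\rho_{\alpha_\Sigma}$; from $\delta\alpha_\Sigma\in\Delta_\Sigma$ the horizontal equation with the curvature term $\rho_{\alpha_\Sigma}\cdot(i_{\alpha_\Sigma'}\tilde F^\omega)$ (now landing in $\Delta_\Sigma^\circ$ because the variation is only horizontal); and from $\delta\alpha_T$ the reduced energy conservation $e'=0$. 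The boundary terms at $\tau=\tilde\tau$ combine into $[\pi_{\alpha_\Sigma}\cdot\delta\alpha_\Sigma-e\,\delta\alpha_T]_{\tilde\tau^-}^{\tilde\tau^+}$, which, for $\delta\alpha_\Sigma(\tilde\tau)\in T_{\alpha_\Sigma(\tilde\tau)}\partial\Sigma\cap\Delta_\Sigma$ and $\delta\alpha_T(\tilde\tau)$ free, produce precisely the reduced elastic-impact conditions stated in (4).

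I expect the main obstacle to be the vertical step: one must justify that \eqref{eq:deltaeta} is the correct expression for $\delta^\omega\eta_{\alpha_\Sigma}$ in the presence of the collision (so that the covariant integration by parts of $\rho_{\alpha_\Sigma}\cdot\nabla^\omega\hat\eta_{\alpha_\Sigma}/d\tau$ on each subinterval is well-defined) and, crucially, that the curvature contribution $\rho_{\alpha_\Sigma}\cdot\tilde F^\omega(\delta\alpha_\Sigma,\alpha_\Sigma')$ produced by this substitution combines correctly with the horizontal terms to yield the single curvature term appearing in the horizontal equation. For this step the proof effectively invokes the Cendra--Marsden--Ratiu variational identity \cite{CeMaRa2001} on each regular subinterval, exactly as in the proof of Theorem \ref{theorem:reducedHdAPprinciple}; the only novelty is the bookkeeping of the boundary contributions at $\tilde\tau$, which is precisely what gives rise to the reduced impact conditions.
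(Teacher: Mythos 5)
Your proposal is correct and follows essentially the same route as the paper: both reduce the problem to the equivalence $(3)\Leftrightarrow(4)$ via Theorems \ref{theorem:implicitELequations} and \ref{theorem:reducedHdAPprinciple}, then compute $d\mathfrak S$ with the variations split by $\nabla^\Sigma$, $\nabla^\omega$ and their duals, split the domain at $\tilde\tau$, integrate by parts on each subinterval, substitute \eqref{eq:deltaeta} to produce the curvature and $\ad^*$ terms, and read off the equations and impact conditions from the free interior and boundary variations. The only (immaterial) difference is that you start from the energy-based form of the reduced action while the paper varies the Lagrangian-based form.
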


\begin{proof}
The equivalences between $(i)$ and $(ii)$, and between $(i)$ and $(iii)$ were established in Theorems \ref{theorem:implicitELequations} and \ref{theorem:reducedHdAPprinciple}. To conclude, let us show the equivalence between $(iii)$ and $(iv)$. To that end, Theorem \ref{theorem:reducedHdAPprinciple} is used and the reduced variations are decomposed into its horizontal and vertical parts by means of the linear connections $\nabla^\omega$, $\nabla^\Sigma$, $\nabla^{\omega*}$ and $\nabla^{\Sigma*}$. Note that \eqref{eq:projectionvariationreduced} ensures that the horizontal part of all the variations is $\delta\alpha_\Sigma\in\Delta_{\boldsymbol\Sigma(\tilde\tau)}(\alpha_\Sigma)$. Therefore,
\begin{align*}
& d\mathfrak S\left([\texttt c]_G,\eta_{\alpha_\Sigma}\right)\left(\delta[\texttt c ]_G,\delta\eta_{\alpha_\Sigma}\right)\\
& \hspace{10mm}=\int_{\tau_0}^{\tau_1}\bigg(\frac{\delta\ell}{\delta\sigma}\cdot\delta\alpha_\Sigma+\frac{\delta\ell}{\delta v_\sigma}\cdot\delta^\Sigma\nu_{\alpha_\Sigma}+\frac{\delta\ell}{\delta\xi_\sigma}\cdot\delta^\omega\xi_{\alpha_\Sigma}+\delta^{\Sigma*}\pi_{\alpha_\Sigma}\cdot\bigg(\frac{\alpha_\Sigma'}{\alpha_T'}-\nu_{\alpha_\Sigma}\bigg)\\
& \hspace{25mm}+\pi_{\alpha_\Sigma}\cdot\bigg(\frac{\delta^\Sigma\alpha_\Sigma'}{\alpha_T'}-\frac{\alpha_\Sigma'\delta\alpha_T'}{(\alpha_T')^2}-\delta^\Sigma\nu_{\alpha_\Sigma}\bigg)+\delta^{\Sigma*}\rho_{\alpha_\Sigma}\cdot\bigg(\frac{\eta_{\alpha_\Sigma}}{\alpha_T'}-\xi_{\alpha_\Sigma}\bigg)\\
& \hspace{25mm}+\rho_{\alpha_\Sigma}\cdot\bigg(\frac{\delta^\omega\eta_{\alpha_\Sigma}}{\alpha_T'}-\frac{\eta_{\alpha_\Sigma}\delta\alpha_T'}{(\alpha_T')^2}-\delta^\omega\xi_{\alpha_\Sigma}\bigg)\bigg)\,\alpha_T'\,d\tau\\
& \hspace{10mm}+\int_{\tau_0}^{\tau_1}\bigg(\ell+\pi_{\alpha_\Sigma}\cdot\left(\frac{\alpha_\Sigma'}{\alpha_T'}-\nu_{\alpha_\Sigma}\right)+\rho_{\alpha_\Sigma}\cdot\left(\frac{\eta_{\alpha_\Sigma}}{\alpha_T'}-\xi_{\alpha_\Sigma}\right)\bigg)\,\delta\alpha_T'\,d\tau,
\end{align*}
where the reduced Lagrangian, as well as its partial derivatives, are evaluated at \newline$(\nu_{\alpha_\Sigma}(\tau),\xi_{\alpha_\Sigma}(\tau))$.

Note that
\begin{equation*}
\delta^\Sigma\alpha_\Sigma'=\frac{\nabla^{\Sigma}\delta\alpha_\Sigma}{d\tau},\qquad(\pi_{\alpha_\Sigma}\cdot\delta\alpha_\Sigma)'=\left(\frac{\nabla^{\Sigma*}\pi_{\alpha_\Sigma}}{d\tau}\right)\cdot\delta\alpha_\Sigma+\pi_{\alpha_\Sigma}\cdot\frac{\nabla^\Sigma\delta\alpha_\Sigma}{d\tau}.
\end{equation*}
By using this, splitting the integration domain, $[\tau_1,\tilde\tau)\cup(\tilde\tau,\tau_1]$, integrating by parts on each sub-interval and regrouping terms, the previous expression leads to
\begin{equation*}
d\mathfrak S\left([\texttt c]_G,\eta_{\alpha_\Sigma}\right)\left(\delta[\texttt c]_G,\delta\eta_{\alpha_\Sigma}\right)=\mathcal I(\tau_0,\tilde\tau)+\mathcal I(\tilde\tau,\tau_1)+\mathcal B(\tau_0,\tilde\tau^-)+\mathcal B(\tilde\tau^+,\tau_1),
\end{equation*}
where for each $a,b\in\mathbb R$, $a<b$, we set
\begin{align*}
& \mathcal I(a,b)=\int_a^b\bigg(\left(\frac{\delta\ell}{\delta\sigma}\alpha_T'-\frac{\nabla^{\Sigma*}\pi_{\alpha_\Sigma}}{d\tau}\right)\cdot\delta\alpha_\Sigma+\alpha_T'\left(\frac{\delta\ell}{\delta v_\sigma}-\pi_{\alpha_\Sigma}\right)\cdot\delta^\Sigma\nu_{\alpha_\Sigma}\\
& \hspace{15mm}+\alpha_T'\left(\frac{\delta\ell}{\delta\xi_\sigma}-\rho_{\alpha_\Sigma}\right)\cdot\delta^\omega\xi_{\alpha_\Sigma}+\delta^{\Sigma*}\pi_{\alpha_\Sigma}\cdot\bigg(\frac{\alpha_\Sigma'}{\alpha_T'}-\nu_{\alpha_\Sigma}\bigg)\,\alpha_T'\\
& \hspace{15mm}+\delta^{\omega*}\rho_{\alpha_\Sigma}\cdot\bigg(\frac{\eta_{\alpha_\Sigma}}{\alpha_T'}-\xi_{\alpha_\Sigma}\bigg)\,\alpha_T'+\rho_{\alpha_\Sigma}\cdot\delta^\omega\eta_{\alpha_\Sigma}+e'\,\delta\alpha_T\bigg)\,d\tau\\
& \mathcal B(a,b)=\bigg[\pi_{\alpha_\Sigma}(\tau)\cdot\delta\alpha_\Sigma(\tau)-e\,\delta\alpha_T(\tau)\bigg]_{\tau=a}^{\tau=b},
\end{align*}
with the reduced energy evaluated at $((\nu_{\alpha_\Sigma}(\tau),\xi_{\alpha_\Sigma}(\tau)),(\pi_{\alpha_\Sigma}(\tau),\rho_{\alpha_\Sigma}(\tau)))$. The reduced conditions for the elastic impact are straightforwardly obtained by recalling that $\delta\alpha_T$ is free, $\delta\alpha_\Sigma\in\Delta_{\boldsymbol\Sigma(\tilde\tau)}(\alpha_\Sigma)$ and both of them vanish at the endpoints.

Next, we use \eqref{eq:deltaeta}, as well as integration by parts, to write
\begin{align*}
& \mathcal I(a,b)=\int_a^b\bigg(\left(\frac{\delta\ell}{\delta\sigma}\alpha_T'-\frac{\nabla^{\Sigma*}\pi_{\alpha_\Sigma}}{d\tau}-\rho_{\alpha_\Sigma}\cdot\left(i_{\alpha_\Sigma'}\tilde F^\omega\right)\right)\cdot\delta\alpha_\Sigma+\alpha_T'\left(\frac{\delta\ell}{\delta v_\sigma}-\pi_{\alpha_\Sigma}\right)\cdot\delta^\Sigma\nu_{\alpha_\Sigma}\\
& \hspace{15mm}+\alpha_T'\left(\frac{\delta\ell}{\delta\xi_\sigma}-\rho_{\alpha_\Sigma}\right)\cdot\delta^\omega\xi_{\alpha_\Sigma}+\delta^{\Sigma*}\pi_{\alpha_\Sigma}\cdot\bigg(\frac{\alpha_\Sigma'}{\alpha_T'}-\nu_{\alpha_\Sigma}\bigg)\,\alpha_T'+\delta^{\omega*}\rho_{\alpha_\Sigma}\cdot\bigg(\frac{\eta_{\alpha_\Sigma}}{\alpha_T'}-\xi_{\alpha_\Sigma}\bigg)\,\alpha_T'\\
& \hspace{15mm}+\left(\frac{\nabla^{\omega*}\rho_{\alpha_\Sigma}}{d\tau}-\ad_{\eta_{\alpha_\Sigma}}^*(\rho_{\alpha_\Sigma})\right)\cdot\hat\eta_{\alpha_\Sigma}+e'\,\delta\alpha_T\bigg)\,d\tau
\end{align*}
The reduced equations are now straightforward by using that $\delta\alpha_\Sigma\in\Delta_{\boldsymbol\Sigma(\tilde\tau)}(\alpha_\Sigma)$, $\delta^\Sigma\nu_{\alpha_\Sigma}\in T_{\alpha_\Sigma}\boldsymbol\Sigma(\tilde\tau)$, $\delta^\omega\xi_{\alpha_\Sigma}\in\tilde{\boldsymbol{\mathfrak g}}(\tilde\tau)_{\alpha_\Sigma}$, $\delta^{\Sigma*}\pi_{\alpha_\Sigma}\in T_{\alpha_\Sigma}^\star\boldsymbol\Sigma(\tilde\tau)$, $\delta^{\omega*}\rho_{\alpha_\Sigma}\in\tilde{\boldsymbol{\mathfrak g}}^*(\tilde\tau)_{\alpha_\Sigma}$, $\hat\eta_{\alpha_\Sigma}\in\tilde{\boldsymbol{\mathfrak g}}(\tilde\tau)$ and $\delta\alpha_T\in T_{\alpha_T}\mathcal T$ are free variations.
\end{proof}

As for the original (unreduced) equations, by using the change of variable $t=\alpha_T(\tau)$, the nonholonomic implicit Lagrange--Poincaré equations for a curve
\begin{equation*}
\left(\left(u_{\tilde\alpha},\overline\xi_{\tilde\alpha}\right),\left(y_{\tilde\alpha},\overline\rho_{\tilde\alpha}\right),\overline\eta_{\tilde\alpha}\right):[t_0,t_1]\to\big(T\Sigma\oplus\tilde{\mathfrak g}\big)\oplus\big(T^*\Sigma\oplus\tilde{\mathfrak g}^*\big)\oplus\tilde{\mathfrak g}
\end{equation*}
read
\begin{equation}\label{eq:LPeqs}
\begin{array}{l}
    \left\{\begin{array}{l}
    \displaystyle\frac{\nabla^{\Sigma*}y_{\tilde\alpha}}{dt}-\frac{\delta\ell}{\delta\sigma}\left(u_{\tilde\alpha},\overline\xi_{\tilde\alpha}\right)+\overline\rho_{\tilde\alpha}\cdot\left(i_{\dot\sigma_{\tilde\alpha}}\tilde F^\omega\right)\in\Delta_{\Sigma}^\circ(\sigma_{\tilde\alpha}),\vspace{2mm}\\
    \displaystyle\dot\sigma_{\tilde\alpha}=u_{\tilde\alpha}\in\Delta_\Sigma(\sigma_{\tilde\alpha}),\qquad \frac{\delta\ell}{\delta v_\sigma}\left(u_{\tilde\alpha},\overline\xi_{\tilde\alpha}\right)=y_{\tilde\alpha},
    \end{array}\right.\vspace{2mm}\\
    \left\{\begin{array}{l}
    \displaystyle\frac{\nabla^{\omega*}\overline\rho_{\tilde\alpha}}{dt}=\ad_{\overline\zeta_{\tilde\alpha}}^*\left(\overline\rho_{\tilde\alpha}\right),\vspace{2mm}\\
    \displaystyle\overline\zeta_{\tilde\alpha}=\overline\xi_{\tilde\alpha}\in\Delta_{\tilde{\mathfrak g}}(\sigma_{\tilde\alpha}),\qquad\frac{\delta\ell}{\delta\xi_\sigma}\left(u_{\tilde\alpha},\overline\xi_{\tilde\alpha}\right)=\overline\rho_{\tilde\alpha},
    \end{array}\right.\vspace{2mm}\\
    \hspace{4.5mm} e'\left(\left(u_{\tilde\alpha},\overline\xi_{\tilde\alpha}\right),\left(y_{\tilde\alpha},\overline\rho_{\tilde\alpha}\right)\right)=0,
\end{array}
\end{equation}
where we have used Remark \ref{remark:etaalpha} and denoted $\overline\zeta_{\tilde\alpha}=\overline\eta_{\tilde\alpha}/(\alpha_T'\circ\alpha_T^{-1})$. Moreover, the reduced conditions for the elastic impact read
\begin{align}\label{eq:reducedimpactL}
& y_{\tilde\alpha}^+-y_{\tilde\alpha}^-\in\left(T_{\sigma_{\tilde\alpha}\left(\tilde t\right)}\partial\Sigma\cap\Delta_\Sigma\left(\sigma_{\tilde\alpha}\left(\tilde t\right)\right)\right)^\circ=\left(T_{\sigma_{\tilde\alpha}\left(\tilde t\right)}\partial\Sigma\right)^\circ+\Delta_\Sigma^\circ\left(\sigma_{\tilde\alpha}\left(\tilde t\right)\right),\\\nonumber
& e\left(\left(u_{\tilde\alpha}^+,\overline\xi_{\tilde\alpha}^+\right),\left(y_{\tilde\alpha}^+,\overline\rho_{\tilde\alpha}^+\right)\right)=e\left(\left(u_{\tilde\alpha}^-,\overline\xi_{\tilde\alpha}^-\right),\left(y_{\tilde\alpha}^-,\overline\rho_{\tilde\alpha}^-\right)\right),\\\label{eq:reducedimpactDelta}
& \dot\sigma_{\tilde\alpha}^+=u_{\tilde\alpha}^+\in\Delta_\Sigma\left(\sigma_{\tilde\alpha}\left(\tilde t\right)\right),\qquad\overline\zeta_{\tilde\alpha}^+=\overline\xi_{\tilde\alpha}^+\in\Delta_{\tilde{\mathfrak g}}\left(\sigma_{\tilde\alpha}\left(\tilde t\right)\right),
\end{align}
where we have denoted $y_{\tilde\alpha}\left(\tilde t^+\right)=y_{\tilde\alpha}^+$, etc.

\begin{remark}
For the unconstrained case, i.e., $\Delta_\Sigma=T\Sigma$ and $\Delta_{\tilde{\mathfrak g}}=\tilde{\mathfrak g}$, we recover the \emph{implicit Lagrange--Poincaré equations} presented in  \cite{YoMa2009}.

\end{remark}

\subsection{Example: The spherical pendulum hitting a cylindrical surface}

Let us continue with the example of \S\ref{ex:pendulum}. This system is invariant by rotations about the vertical axis, that is, the Lagrangian \eqref{eq:lagrangianpendulum} and the constraint distribution \eqref{eq:constraintpendulum} are invariant by the (tangent lift of) following action:
${\rm SO}(2)\times Q\to Q,\quad(\psi,(\theta,\varphi))\mapsto(\theta,\varphi+\psi)$.

Note that this action is free and proper except when $\theta=k\pi$ for some $k\in\mathbb Z$. Hence, the following is only valid for trajectories not passing through those configurations. The reduced configuration space and its boundary are given by 
\begin{equation*}
\Sigma=Q/{\rm SO}(2)\simeq\{\theta\in S^1\mid L\sin\theta\leq 1\},\qquad\partial\Sigma=\{\theta\in S^1\mid L\sin\theta= 1\}.
\end{equation*}
Hence, the tangent bundle and its annihilator are given by
$T\partial\Sigma=0,\quad (T\partial\Sigma)^\circ=\operatorname{span}\{d\theta\}=T^*\partial\Sigma$. Since we are working locally, $\tilde{\mathfrak{so}}(2)=\Sigma\times\mathfrak{so}(2)=\Sigma\times\mathbb R$ and we may choose the trivial principal connection on $Q\to\Sigma$, whence
\begin{equation*}
(TQ)/{\rm SO}(2)\simeq T\Sigma\oplus\tilde{\mathfrak{so}}(2)=T\Sigma\times\mathbb R,\quad[\theta,\varphi;v_\theta,v_\varphi]_{{\rm SO}(2)}\mapsto(\theta,v_\theta;\xi=v_\varphi).
\end{equation*}
The reduced Lagrangian thus reads
\begin{equation*}
\ell(\theta,v_\theta,\xi=v_\varphi)=\frac{1}{2}mL^2(v_\theta^2+\xi^2\sin^2\theta)-mgL\cos\theta,
\end{equation*}
and analogous for the reduced constraint distributions:
\begin{equation*}
\Delta_\Sigma=\operatorname{span}\{\partial_\theta\}=T\Sigma,\qquad\Delta_{\tilde{\mathfrak{so}}(2)}=\operatorname{span}\{f(\theta)\partial_\varphi\}.
\end{equation*}
The corresponding annihilators vanish $\Delta_\Sigma^\circ=0$ and $\Delta_{\tilde{\mathfrak{so}}(2)}^\circ=0$.

By gathering all, the implicit Lagrange--Poincaré equations \eqref{eq:LPeqs} for a curve
\begin{equation*}
(\theta,v_\theta,p_\theta,\xi=v_\varphi,\rho=p_\varphi,\zeta):[t_0,t_1]\to(T\Sigma\oplus T^*\Sigma)\times\mathbb R\times\mathbb R\times\mathbb R,
\end{equation*}
read
\begin{equation*}
\begin{array}{l}
    \left\{\begin{array}{l}
    \displaystyle \dot p_\theta=mL\sin\theta(L\xi^2\cos\theta+g),\vspace{2mm}\\
    \displaystyle\dot\theta=v_\theta,\hspace{22mm} p_\theta=mL^2v_\theta,
    \end{array}\right.\vspace{2mm}\\
    \left\{\begin{array}{l}
    \displaystyle\dot\rho=0,\vspace{2mm}\\
    \displaystyle\zeta=\xi=\mu_0 f(\theta),\qquad \rho=mL^2\xi\sin^2\theta.
    \end{array}\right.
\end{array}
\end{equation*}
on $[t_0,t_1]-\left\{\tilde t\right\}$, where $\mu_0\in\mathbb R$ is the Lagrange multiplier. The impact condition at $t=\tilde t$ given in \eqref{eq:reducedimpactL} reads
$p_\theta^+-p_\theta^-=\lambda_0$, where we denote $p_\theta^+=p_\theta\left(\tilde t^+\right)$, etc., and $\lambda_0\in\mathbb R$ is the Lagrange multiplier. Similarly, the condition \eqref{eq:reducedimpactDelta} reads
$\dot\theta^+=v_\theta^+,\qquad\zeta^+=\xi^+=\lambda_0 f(\theta)$, where $\dot\theta^+=\theta\left(\tilde t^+\right)$, etc., and $\lambda_0$ is the Lagrange multiplier.

\section{Conclusions and future work}

In this paper, the Lagrange--d'Alembert--Pontryagin action functional has been extended to the nonsmooth setting to account for nonholonomic systems undergoing elastic collisions. The configuration space of these systems is a smooth manifold with boundary and the impact takes place when the trajectory of the system reaches the boundary. The dynamical equations thus obtained are known as the implicit Euler--Lagrange equations with collisions and naturally include the energy conservation, as well as the appropriate conditions for the impact. Furthermore, for systems with symmetries, the geometric structures describing the system are reduced, yielding the nonholonomic implicit Lagrange--Poincaré equations with collisions, which consist of the horizontal equations, the vertical equations and the conditions for the impact. Both in the original and reduced formulations, the equations are first obtained by applying the variational principle with an auxiliary parameter $\tau$ and, then, they are reparametrized to rewrite them in terms of the time $t$. Lastly, the theory is illustrated through some examples.

For future work, we would like to explore the following lines:
\begin{enumerate}
    \item \textbf{Symmetry breaking}: When the system is not invariant by the action of the entire Lie group, but only by its restriction to a subgroup. This situation, which is particularly relevant in fluid dynamics \cite{gay2010reduction}, is related to systems whose symmetry group is a semi-direct product and leads to advected parameters in the reduced equations \cite{GaYo2015}.
    \item \textbf{Variational integrators}: A discrete counterpart of this theory would be highly desirable in order to obtain numerical schemes that preserve the geometric structures underlying the dynamical equations, thus leading to well-behaved simulations even for long times \cite{FeMaOrWe2003,RoLe2023}.
    \item \textbf{Interconnection}: Interconnection for Lagrange--Dirac systems \cite{JaYo2014,Ro2023} may be extended to the nonsmooth setting. The collisions are expected to be transferred due to the coupling, yielding impacts on systems that originally have no collisions.
    \item \textbf{Nonelastic impacts}: This situation may be modelled by introducing a restitution coefficient in the reset map \cite{brogliato1999nonsmooth} and will allow for treating a wider number of physical systems.
\end{enumerate}

\section*{Funding.}
ARA has been partially supported by Ministerio de Ciencia e Innovación (Spain) under grant PID2021-126124NB-I00. LC acknowledges financial support from Grant PID2022-137909NB-C21 funded by MCIN/AEI/ 10.13039/501100011033.

\bibliographystyle{siamplain}
\bibliography{references}

\end{document}